\newmdenv[topline=false, bottomline=false, skipabove=\topsep, skipbelow=\topsep]{siderules}
\newtheorem{theorem}{Theorem}
\newtheorem{proposition}{Proposition}
\newtheorem{lemma}{Lemma}
\newtheorem{definition}{Definition}
\def\sD{\mathscr{D}}
\def\sH{\mathscr{H}}
\def\sK{\mathscr{K}}
\def\sP{\mathscr{P}}
\def\bS{{\mathbb S}}
\def\cA{{\mathfrak A}}
\def\cB{{\mathfrak B}}
\def\cM{{\mathfrak M}}
\newcommand{\ca}[1]{{\cal #1}}
\newcommand{\ben}{\begin{equation}}
\newcommand{\een}{\end{equation}}
\def\bena{\begin{eqnarray}}
\def\eena{\end{eqnarray}}
\def\cA{{\ca A}}
\def\cB{{\ca B}}
\def\cM{{\ca M}}
\renewcommand{\H}{\mathscr{H}}
\def\1{{\mathds{1}}}
\newcommand{\dd}{{\rm d}}
\newcommand{\tr}{\operatorname{Tr}}
\renewcommand{\log}{\operatorname{ln}}
\renewcommand{\epsilon}{\varepsilon}
\newcommand{\RR}{\mathbb{R}}
\newcommand{\CC}{\mathbb{C}}
\renewcommand{\Re}{{\rm Re}}
\begin{document}
\title{Approximate recoverability and relative entropy II: 
2-positive channels of general v. Neumann algebras
}

	\author{Thomas Faulkner$^{1}$\thanks{\tt\tt tomf@illinois.edu}, Stefan Hollands$^{2}$\thanks{\tt stefan.hollands@uni-leipzig.de}\\
	{\it $^1$ University of Illinois at Urbana-Champaign, IL}\\
	{\it $^2$ ITP, Universit\" at Leipzig, MPI-MiS Leipzig and KITP, Santa Barbara}
	}

\date{\today}
	
\maketitle

\begin{abstract}
We generalize our results in paper I in this series to quantum channels between general v. Neumann algebras, 
proving the approximate recoverability of states which undergo a small change in 
relative entropy through the channel. 
To this end, we derive a strengthened form of the quantum data processing inequality for the change in 
relative entropy of two states under a channel between two v. Neumann algebras.
Compared to the usual inequality, there is an explicit lower bound involving the fidelity 
between the original state and a 
recovery channel.
\end{abstract}

\section{Introduction}

The relative entropy between two density operators $\rho, \sigma$, defined as 
\ben\label{eq:Srel}
S(\rho| \sigma) = \tr[\rho (\log \rho - \log \sigma)], 
\een
is an asymptotic measure of their distinguishability. Classically, $e^{-NS(\{p_i\} | \{q_i\})}$ approaches for large $N$ 
the probability for a sample of size $N$ of letters, distributed according to the 
true distribution $\{p_i\}$, when calculated according to an incorrect guess $\{q_i\}$. In the non-commutative setting, the relative entropy was later generalized to the 
general v. Neumann algebras by Araki \cite{Araki1,Araki2} using relative modular hamiltonians. 

By far the most fundamental property of the relative entropy -- from which in fact essentially all others follow -- is its monotonicity under a channel. A channel is roughly an arbitrary combination of (i) a unitary time evolution of the density matrix, (ii) a v. Neumann measurement followed by post-selection, (iii) forgetting part of the system (partial trace). The fundamental property is that if $T$ is such a channel and its application to a density matrix is $\rho \circ T$
(understood in the sense of expectation functional, $\rho \circ T (a)=\tr(\rho T(a))$, then always
\ben\label{eq:dpi}
S(\rho | \sigma) \ge S( \rho \circ T| \sigma\circ T). 
\een
In quantum information theory, $T$ is related to data processing, so \eqref{eq:dpi} is sometimes called the data-processing inequality (DPI).
It implies for instance the strong subadditivity property of the v. Neumann entropy \cite{Lieb}. 
After important special cases were proven by \cite{Lindbladt,Araki1}, the data-processing inequality was demonstrated in the setting of general v. Neumann algebras
by Uhlmann \cite{Uhlmann1977}.

Suppose the distinguishability decreases only by a small amount $\epsilon$ under $T$. Then one would like to say that, given $\sigma, T$, 
the state $\rho$ can be recovered ``with high fidelity'' from $\rho\circ T$. For this, one should establish an improved a 
lower bound on the quantity $S(\rho | \sigma) - S( \rho \circ T | \sigma \circ T)$ in terms of the fidelity. 
This issue has been considered by several authors such as \cite{Junge, Berta, Wilde, Carlen, Jencova, Sutter, Sutter2}, 
partly based on earlier characterizations of the case of equality in \eqref{eq:dpi} \cite{Petz1,Petz2,Petz3,JencovaPetz}. 

A particularly attractive lower bound 
in the DPI has recently been given by Junge et al. \cite{Junge} for v. Neumann algebras of type I. 
They consider a certain recovery channel $\alpha$ that is closely related to the 
Petz-map (see e.g. \cite{Petz1993}, prop. 8.3 and sec. \ref{sec:4} below) which is constructed in a canonical way 
out of $T$ and $\sigma$, but does not depend on $\rho$. They then show that 
there is a lower bound in the DPI of the form of an averaged log-fidelity,  
\ben
\label{eq:dpiimp}
S(\rho | \sigma) - S( \rho \circ T| \sigma \circ T) \ge 
-\log 
F(\rho  | \rho \circ T \circ \alpha)^2 ,
\een
where $F$ is the ``square root'' fidelity between two states \eqref{eq:Fchar}.

It is interesting to note that, while \cite{Junge} prove it only for type I v. Neumann algebras\footnote{I.e., direct sums of matrix algebras or the algebra of all bounded operators on a separable Hilbert space.}, the ingredients of this inequality formally make sense for channels between arbitrary v. Neumann algebras. Thus, it is a natural question whether the inequality generalizes to that setting. We are motivated in particular by recent applications of the improved form of the DPI in high energy physics \cite{swingle,Faulkner}, where such questions are natural in the algebraic approach to quantum field theory \cite{haag_2}. In this context, the v. Neumann algebras under consideration are of type III$_1$ \cite{Buchholz}. On the other hand, the proof by 
\cite{Junge} strongly uses the special properties of type I v. Neumann algebras. In this paper, we give a proof of \eqref{eq:dpiimp} for arbitrary sigma-finite v. Neumann algebras, see thm. \ref{thm:2}.  While we were finishing this manuscript \cite{Junge:2020bwo} appeared which reported results that have some overlap with ours, albeit with different methods. 

Our result generalizes paper I where the case of an embedding $T$ of arbitrary v. Neumann algebras was considered -- corresponding to a ``partial trace'' of density operators in the type I context. The proof strategy is broadly similar to paper I but some new features arise which we deal with in this paper. In section 
\ref{sec:2} we recall how the ingredients in our inequality \eqref{eq:dpiimp} are defined in the setting of general v. Neumann algebras. Then in section \ref{sec:3} we state the main results indicating 
also some generalizations. In section \ref{sec:4} we make basic remarks on a connection with the Jones index of v. Neumann subfactors, and in section 
\ref{sec:5} we give the proof of our main theorems. 

\medskip
\noindent
{\bf Notations and conventions:} Calligraphic letters $\cA, \cM, \dots$ denote v. Neumann algebras. Calligraphic letters $\sH, \sK, \dots$ denote linear spaces.  $\bS_a=\{z \in \CC \mid 0 < \Re (z) < a\}$ denotes a strip. We use the physicist's ``ket''-notation $|\psi\rangle$ for vectors in a Hilbert space. 
The scalar product is written as 
\ben
(|\psi\rangle, |\psi'\rangle)_{\sH} =: \langle \psi | \psi' \rangle
\een
and is anti-linear in the first entry. The norm of a vector is written simply as
$\| |\psi\rangle \| =: \| \psi \|$. The action of a linear operator $T$ on a ket is sometimes written as $T|\phi\rangle = |T\phi\rangle$. 
In this spirit, the norm of a bounded linear operator $T$ on $\sH$ is written as $\|T\|= \sup_{|\psi\rangle: \|\psi\|=1} \|T\psi\|$. 

\section{Relative entropy, data processing inequality and Petz map}\label{sec:2}

\subsection{Relative entropy}

Let $(\cM, J, \sP_\cM^\natural, \sH)$ be a v. Neumann algebra in standard form acting on a Hilbert space $\sH$, 
with natural cone $\sP^\sharp_\cM$ and modular conjugation $J$ (see paper I \cite{Faulkner:2020iou} for our notations and \cite{Bratteli,Takesaki} as general references). 
As in paper I  \cite{Faulkner:2020iou}, we use relative modular operators $\Delta_{\eta,\psi}$ associated with two vectors 
$|\eta\rangle, |\psi\rangle \in \sH$ in our constructions. 
According to \cite{Araki1,Araki2}, if $\pi^\cM(\eta) \ge \pi^\cM(\psi)$, the relative entropy may be defined in terms of them by 
\ben
S(\psi | \eta) = -\lim_{\alpha \to 0^+} \frac{\langle \xi_\psi | \Delta^\alpha_{\eta, \psi} \xi_\psi \rangle-1}{\alpha} ,  
\een
otherwise, it is by definition infinite. Here, $|\xi_\psi\rangle$ denotes the unique representer of 
a vector $|\psi\rangle$ in the natural cone. The relative entropy only depends on the 
functionals $\omega_\psi, \omega_\eta$ on $\cM$, but not the choice of vectors $|\psi\rangle, |\eta\rangle$ that define these functionals. 
We will therefore use interchangeably the notations $S(\psi | \eta)=S(\omega_\psi | \omega_\eta)$.
The definition may also be rewritten as follows.  
For $t \in \RR$,  the Connes-cocycle is the isometric operator from $\cM$ defined by 
\ben
(D\psi : D\eta)_{t} = \Delta^{it}_{\psi,\eta} \Delta^{-it}_{\eta,\eta} 
\een
which can also be extracted from $ \Delta^{it}_{\psi,\psi} \Delta^{-it}_{\eta,\psi} = (D\psi : D\eta)_{t}  \pi^{\mathcal{M}'}(\psi) $.
As the relative modular operator, this only depends on the state functionals $\omega_\psi, \omega_\eta$.
In terms of the Connes-cocycle, the relative entropy may also be defined as
\ben
S(\psi | \eta) = -i\frac{\dd}{\dd t} \langle \xi_\psi | (D\eta : D\psi)_{t} \xi_\psi \rangle |_{t=0} = -i\frac{\dd}{\dd t} \omega_\psi( (D\eta : D\psi)_{t} ) |_{t=0}.  
\een
The last expression has the advantage that it does not require one to know the vector representative 
of $|\psi \rangle$ in the natural cone. The derivative exists whenever $S(\psi | \eta)<\infty$ \cite{Petz1993}, thm. 5.7.
In the case of the matrix algebra $M_n(\CC)$, where $\omega_\eta$ and $\omega_\psi$ are identified with density matrices, 
the relative entropy is the usual expression \eqref{eq:Srel}.

\subsection{Data processing inequality}

The basic situation studied in this paper is the following. $\cB, \cA$ are two v. Neumann algebras, assumed to be in standard form. They act on Hilbert spaces $\sK, \sH$, 
with corresponding natural cones $\sP^\sharp_\cA, \sP^\sharp_\cB$ and associated anti-linear unitary maps $J_\cA, J_\cB$ (so that $J_\cA |\xi\rangle = |\xi\rangle$ for $|\xi\rangle \in \sP^\sharp_\cA$, and similarly for $\cB$) . 
A {\em channel} is a normal, i.e. ultra weakly continuous, linear mapping 
\ben
T: \cB \to \cA
\een
such that $T(1)=1$ and such that any non-negative self-adjoint element of $\cB$ gets mapped to to such an element of $\cA$, i.e. 
$T(b^*b) \ge 0$.
$T$ is not required to be a homomorphism (but could be). We will use the following standard terminology. 

\begin{definition}
\begin{enumerate}
\item 
A channel $T$ is called a Schwarz map if for any $a \in \cB$ we have Kadison's property
\ben\label{eq:eqschwarz}
T(a^*)T(a) \le T(a^*a).
\een
\item
A channel $T$ is called 2-positive if any non-negative element from 
the $2\times 2$ matrix algebra $\cB \otimes M_2(\CC)$ with entries in $\cB$ 
gets mapped under $T \otimes 1_{M_2}$ to a non-negative element of $\cA \otimes M_2(\CC)$.
\end{enumerate}
\end{definition}

A 2-positive map is also a Schwarz map, see e.g. \cite{Petz4}, thm. E. In particular, both properties follow if $T=\iota$ is 
an embedding of v. Neumann algebras, i.e. if $\cB$ is a v. Neumann subalgebra of $\cA$ under a $*$-homomorphism $\iota$.
That case was treated in paper I  \cite{Faulkner:2020iou}. 

If $T$ is a linear, normal mapping $T: \cB \to \cA$, then we define its adjoint ${\cA}_* \to {\cB}_*$, operating between 
the corresponding spaces of normal linear functionals by duality.
Let us assume that we have two normal state functionals $\omega_\eta, \omega_\psi$ on $\cA$ induced by vectors from $\sH$. 
Then the pull backs $ \omega_\eta \circ T, \omega_\psi \circ T$ give normal state functionals on $\cB$. We shall write 
\ben\label{eq:conventions}
\omega_\eta(a) \equiv \langle \eta_\cA | a \eta_\cA \rangle, \quad 
\omega_\eta \circ T(b) \equiv \langle \eta_\cB | b \eta_\cB \rangle, 
\een
with suitable vector representatives $|\eta_\cA\rangle \in \sH, |\eta_\cB\rangle \in \sK$ that we may choose. The (unique) representers  in the natural cones
will be denoted by $|\xi_{\eta}^\cA \rangle \in \sH, |\xi_{\eta}^\cB \rangle \in \sK$.
A similar notation is adopted when $|\eta\rangle$ is replaced by  $|\psi \rangle$. 

The quantum data processing inequality (DPI) \cite{Uhlmann1977} states that 
\ben
S(\omega_\psi | \omega_\eta) \ge S(\omega_\psi \circ T | \omega_\eta \circ T), 
\een
viewing the relative entropy as
a function on state functionals,
or equivalently that $S(\psi_\cA | \eta_\cA) \ge S(\psi_\cB | \eta_\cB)$ viewing it as a function on vectors. 
One of our main results, thm. \ref{thm:1}, will be an improved, explicit, lower bound on this inequality
in the spirit of \cite{Junge}, generalizing our earlier paper I where the special case of inclusions $T$ was treated. 

\subsection{Petz recovery map}
\label{subsec:petz}

We now recall the definition of the Petz map in the case of general v. Neumann algebras, discussed in more detail in \cite{Petz1993}, sec. 8.
Let $T: \cB \to \cA$ be a unital, normal, and 2-positive (or Schwarz) map between two v. Neumann algebras. Let $\omega_\eta$ be a normal 
state functional on $\cA$ with pull-back $ \omega_\eta \circ T$ to $\cB$. If $\omega_\eta$ is not faithful for $\cA$, then as in paper I \cite{Faulkner:2020iou}
we consider instead the v. Neumann subalgebra $\cA_\pi = \pi^\cA(\eta) \cA \pi^\cA(\eta) \pi^{\cA'}(\eta)$, noting that this does not change 
the relative entropy $S(\psi | \eta)$ since by definition $\pi^\cA(\psi) \le \pi^\cA(\eta)$ unless the entropy is infinite (in which case all of our 
inequalities are trivial). Likewise, letting $\pi^\cB(\eta)$ be the support projection of $\omega_\eta \circ T$ in $\cB$, if $\omega_\eta \circ T$
is not faithful, we pass to $\cB_\pi = \pi^\cB(\eta) \cB \pi^\cB(\eta)\pi^{\cB'}(\eta)$. Finally, we pass from $T:\cB \to \cA$ to $T_\pi( \, . \, ) = \pi^\cA(\eta) T( \, . \, )
\pi^\cA(\eta)\pi^{\cA'}(\eta)$ which is again a normal, unital 2-positive (or Schwarz) map $T_\pi:  \cB_\pi \to \cA_\pi$. 
By such kinds of constructions, we can and will restrict ourselves from now on to states such that both $ \omega_\eta \circ T, \omega_\eta$ are faithful. Furthermore, by basic properties of the GNS construction, we may assume that $|\eta_\cB\rangle, |\eta_\cA\rangle$ are cyclic and separating. See paper I for details, which are similar here. 

First we recall the definition of the Petz-map. Let $|\eta\rangle$ be cyclic and separating for a v. Neumann algebra $\cM$. Then the KMS scalar product 
on $\cM$ regarded as a vector space is defined as 
\ben
\langle m,n \rangle_\eta = \langle \eta | m^* \Delta_\eta^{1/2} n \eta \rangle . 
\een
Since we have such vectors $|\eta_\cA\rangle$ and $|\eta_\cB\rangle$ for both $\cA$ and $\cB$, we can use the KMS scalar products to 
define the adjoint $T^+:\cA \to \cB$ (depending on the choices of these vectors) of a normal, unital and 2-positive 
$T: \cB \to \cA$ by results of Petz, see \cite{Petz1993} prop. 8.3, 
who shows that $T^+$ is well-defined, normal, unital, and 2-positive. The rotated Petz map, which we call $\alpha_{\eta,T}^t: \cA \to \cB$, 
is defined by conjugating this with the respective modular flows, i.e. 
\ben
\alpha_{\eta,T}^t = \varsigma_{\eta,\cB}^t \circ T^+ \circ \varsigma_{\eta,\cA}^{-t}
\een
where $\varsigma_{\eta,\cA}^t = {\rm Ad} \Delta_{\eta, \cA}^{it}$ is the modular flow for $\cA, |\eta_\cA\rangle$ etc. An equivalent definition is:

\begin{definition}
For $b \in \cB, a \in \cA$, and $T$ is unital, normal, and 2-positive, the rotated Petz map $\alpha_{\eta,T}^t: \cA \to \cB$ is defined implicitly by the identity:
\ben
\label{eq:Petzdef}
\langle b \xi_{\eta}^{\cB} | J_\cB \Delta_{\eta_\cB}^{it} \alpha_{\eta,T}^t(a) \xi_{\eta}^{\cB} \rangle = 
\langle T(b) \xi_\eta | J_\cA \Delta_{\eta_\cA}^{it} a \xi_{\eta}^{\cA} \rangle. 
\een
\end{definition}

The following is then a trivial consequence of \cite{Petz1993}, prop. 8.3:

\begin{lemma}
The map $\alpha_{\eta,T}^t: \cA \to \cB$ is well-defined, normal, unital, and 2-positive for all $t \in \RR$.
\end{lemma}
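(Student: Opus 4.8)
The plan is to read off all four properties directly from the composition formula
$\alpha_{\eta,T}^t = \varsigma_{\eta,\cB}^t \circ T^+ \circ \varsigma_{\eta,\cA}^{-t}$,
using what is already known about its three constituents. By the reductions carried out above we may assume that $\omega_\eta$ and $\omega_\eta\circ T$ are faithful and that $|\eta_\cA\rangle,|\eta_\cB\rangle$ are cyclic and separating, so that \cite{Petz1993}, prop.~8.3, applies and guarantees that the KMS-adjoint $T^+:\cA\to\cB$ is well-defined, normal, unital and $2$-positive. By Tomita--Takesaki theory \cite{Bratteli,Takesaki}, the modular flows $\varsigma_{\eta,\cA}^t = \Ad\,\Delta_{\eta_\cA}^{it}$ and $\varsigma_{\eta,\cB}^t = \Ad\,\Delta_{\eta_\cB}^{it}$ are ultraweakly continuous $*$-automorphisms of $\cA$ and $\cB$ respectively; in particular they, and their inverses $\varsigma^{-t}$, are normal, unital and completely positive, hence $2$-positive.

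It then remains only to note that normality, unitality and $2$-positivity are all stable under composition. A composite of maps each carrying $1$ to $1$ carries $1$ to $1$; a composite of ultraweakly continuous maps is ultraweakly continuous; and if $R:\cC_1\to\cC_2$ and $S:\cC_2\to\cC_3$ are $2$-positive then, since $(S\circ R)\otimes 1_{M_2} = (S\otimes 1_{M_2})\circ(R\otimes 1_{M_2})$, a non-negative element of $\cC_1\otimes M_2(\CC)$ is mapped by $R\otimes 1_{M_2}$ to a non-negative element of $\cC_2\otimes M_2(\CC)$ and then by $S\otimes 1_{M_2}$ to a non-negative element of $\cC_3\otimes M_2(\CC)$, so $S\circ R$ is $2$-positive. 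Applying this to the three factors in the composition formula yields the claim, and well-definedness of $\alpha_{\eta,T}^t$ on all of $\cA$ is immediate since $\varsigma^{\pm t}$ are everywhere defined and $T^+$ is well-defined.

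The only point that is not purely formal, and the one I would expect to absorb whatever small effort is needed, is checking that the composition formula is compatible with the implicit characterization \eqref{eq:Petzdef}, which one should do before invoking the composition formula. Starting from the right-hand side of \eqref{eq:Petzdef}, one uses $\Delta_{\eta_\cA}^{it}\xi_\eta^\cA = \xi_\eta^\cA$ to rewrite $\Delta_{\eta_\cA}^{it}\,a\,\xi_\eta^\cA = \varsigma_{\eta,\cA}^t(a)\,\xi_\eta^\cA$, then the Tomita relation (which turns $\langle c\,\xi_\eta^\cA\mid J_\cA\,d\,\xi_\eta^\cA\rangle$ into the KMS inner product $\langle c,\,d^*\rangle_{\eta_\cA}$), then the defining adjoint property $\langle T(b),x\rangle_{\eta_\cA} = \langle b,T^+(x)\rangle_{\eta_\cB}$ of $T^+$, and finally runs the same manipulations backwards on the $\cB$ side to reassemble the expression as $\langle b\,\xi_\eta^\cB\mid J_\cB\,\Delta_{\eta_\cB}^{it}\,\big(\varsigma_{\eta,\cB}^t\circ T^+\circ\varsigma_{\eta,\cA}^{-t}\big)(a)\,\xi_\eta^\cB\rangle$, i.e. the left-hand side with $\alpha_{\eta,T}^t$ given by the composition formula; the only place to be careful is the direction of the modular flows. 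With this identification in hand the lemma is, as stated, an immediate consequence of \cite{Petz1993}, prop.~8.3, together with the elementary closure of normality, unitality and $2$-positivity under composition.
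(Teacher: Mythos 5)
Your proof is correct and follows essentially the same route as the paper, which simply declares the lemma a trivial consequence of \cite{Petz1993}, prop.~8.3: the composition formula $\alpha_{\eta,T}^t = \varsigma_{\eta,\cB}^t \circ T^+ \circ \varsigma_{\eta,\cA}^{-t}$ together with the stability of normality, unitality and $2$-positivity under composition with the modular automorphisms. Your extra check that the composition formula matches the implicit characterization \eqref{eq:Petzdef} is a reasonable (and correctly flagged) piece of diligence, but it does not change the argument.
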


For the case of finite dimensional type I factors, the definition of the rotated Petz map is easily seen to coincide with that given by \cite{Junge}.

\section{Main theorems}\label{sec:3}

We now state our main theorems. 
Let $T: \cB \to \cA$ be a normal, unital Schwarz map and let $\omega_\psi, \omega_\eta$
be normal state functionals on $\cA$.
We introduce following \cite{Petz4,Petz2} a linear map $V_{\psi}: \sK \to \sH$ by
\ben
\label{eq:Vdef}
V_{\psi}b|\xi_{\psi}^{\cB}\rangle := T(b) |\xi_{\psi}^{\cA} \rangle \quad (b \in \cB).
\een
As it stands, the definition is actually consistent only when $|\xi_{\psi}^{\cB}\rangle$ is cyclic and separating. 
In the general case, one can define \cite{Petz4} instead
\ben
\label{eq:Vdef1}
V_{\psi}(b |\xi_{\psi}^{\cB}\rangle + |\Omega\rangle) := T(b) |\xi_{\psi}^{\cB} \rangle \quad (b \in \cB,
\pi^{\cB'}(\psi) |\Omega\rangle =0).
\een
In either case, it easily follows from Kadison's property \eqref{eq:eqschwarz} that $V_{\psi}$ is a contraction $\|V_{\psi}\| \le 1$, see 
e.g. \cite{Petz4}, proof of thm. 4. 

The first main result is a strengthened 
version of the DPI. To this end, we introduce a vector valued function 
\ben\label{eq:Gupper}
t \mapsto |\Gamma_\psi(1/2+it)\rangle := 
\Delta_{\eta_\cA,\psi_\cA}^{1/2 + it} V_{\psi} \Delta_{\eta_\cB,\psi_\cB}^{-1/2 -it} |\xi_{\psi}^{\cB}\rangle \quad (t\in \RR), 
\een
the existence and properties of which are established in lem. \ref{lem:1} and lem. \ref{lem:5} below. In particular, the representation \eqref{eq:Gupper1}
shows in conjunction with Stone's theorem (see e.g. \cite{specth}, sec. 5.3) that this function is strongly continuous.

\begin{theorem}\label{thm:1}
If $T: \cB \to \cA$ is a normal, unital Schwarz map and $q \in [1,2]$, we have
\ben
\label{eq:main}
S(\omega_\psi | \omega_\eta) - S(\omega_\psi \circ T | \omega_\eta \circ T)
 \ge 
- \int_{-\infty}^\infty p(t) \log \| \Gamma_\psi(1/2+it)\|_{q,\psi}^2 \, \dd t \ . 
\een
Here $\| \zeta \|_{q,\psi}$ denotes the  Araki-Masuda-H\" older $L_q(\cA', \psi)$-norm relative to $\psi,\cA'$, and 
$p(t) := \pi \left[ 1+\cosh(2\pi t) \right]^{-1}$, which is a probability density. 
\end{theorem}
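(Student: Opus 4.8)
The plan is to mimic the strategy of paper~I, adapting it to the non-embedding case. The key object is the vector-valued function $|\Gamma_\psi(z)\rangle$, defined for $z$ in the closed strip $\overline{\bS_{1/2}}$ by $|\Gamma_\psi(z)\rangle := \Delta_{\eta_\cA,\psi_\cA}^{z} V_{\psi} \Delta_{\eta_\cB,\psi_\cB}^{-z} |\xi_{\psi}^{\cB}\rangle$, extending \eqref{eq:Gupper}. First I would establish (this is the content of the lemmas cited, lem.~\ref{lem:1} and lem.~\ref{lem:5}) that this function is well-defined, bounded, weakly holomorphic in the open strip, and strongly continuous up to the boundary. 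On the left edge $z=it$ one computes $|\Gamma_\psi(it)\rangle = V_\psi \Delta^{-it}_{\eta_\cB,\psi_\cB}|\xi^\cB_\psi\rangle = V_\psi (D\eta_\cB:D\psi_\cB)_{-t}^* |\xi^\cB_\psi\rangle$ up to a phase involving $\pi^{\cB'}(\psi)$, which is a unit vector (Connes cocycles are isometries on the support); hence $\|\Gamma_\psi(it)\|=1$. On the right edge $z=1/2+it$ one has $|\Gamma_\psi(1/2+it)\rangle$ as in \eqref{eq:Gupper}, whose norm is controlled by the Araki--Masuda $L_q$ norm; the point is that since $\|V_\psi\|\le 1$ one gets, via the Araki--Masuda interpolation/duality theory, that $\|\Gamma_\psi(1/2+it)\|_{q,\psi}\le 1$ is not automatic but its logarithm is exactly the quantity appearing on the right of \eqref{eq:main}.

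The analytic heart is a Hadamard three-lines argument. Consider the scalar function $f(z) = \langle \xi_\psi^\cA | \Delta_{\eta_\cA,\psi_\cA}^{?}\, \Gamma_\psi(z)\rangle$ — more precisely the right object is $\log$ of a suitable norm of $|\Gamma_\psi(z)\rangle$ measured in the Araki--Masuda $L_{q(z)}$ scale with $q$ interpolating between $\infty$ at the left edge and the given $q\in[1,2]$ at the right edge. Subharmonicity of $z\mapsto \log\|\Gamma_\psi(z)\|$ (from weak holomorphy, via the standard argument that $\log$ of the modulus of a holomorphic function is subharmonic, combined with taking a supremum/duality representation of the norm) gives a bound on the value at an interior point in terms of the Poisson integral of the boundary values. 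Evaluating at $z=1/2$ (or differentiating appropriately) and using the explicit Poisson kernel of the strip $\bS_{1/2}$ — which is precisely $p(t)=\pi[1+\cosh(2\pi t)]^{-1}$ — yields
\ben
\log\|\Gamma_\psi(1/2)\|_{?} \;\le\; \int_{-\infty}^\infty p(t)\,\log\|\Gamma_\psi(1/2+it)\|_{q,\psi}\,\dd t,
\een
the left edge contributing nothing since its log-norm vanishes. The final step is to identify $-2\log\|\Gamma_\psi(1/2)\|$ (in the appropriate norm) with $S(\omega_\psi|\omega_\eta) - S(\omega_\psi\circ T|\omega_\eta\circ T)$; this comes from differentiating the Connes-cocycle formula for relative entropy, using that $\Delta^{1/2}_{\eta_\cA,\psi_\cA}V_\psi\Delta^{-1/2}_{\eta_\cB,\psi_\cB}$ encodes the difference of the two relative modular operators through the intertwining property $V_\psi b|\xi^\cB_\psi\rangle = T(b)|\xi^\cA_\psi\rangle$ together with the adjoint relation $V_\psi^* |\xi^\cA_\eta\rangle$-type identities that produce $T$ acting inside the entropy.

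The main obstacle, and the place where the non-embedding case genuinely differs from paper~I, is twofold. First, one must prove that the Holder-type norm estimate $\|\Gamma_\psi(1/2+it)\|_{q,\psi}$ is finite and that the function $z\mapsto|\Gamma_\psi(z)\rangle$ really extends holomorphically to the strip despite $V_\psi$ being merely a contraction (not a partial isometry implementing a homomorphism, as for an inclusion) — this requires care with the domains of the unbounded operators $\Delta^{z}_{\eta,\psi}$ and a commutation/KMS argument showing $\Delta^{it}_{\eta_\cA,\psi_\cA}V_\psi = V_\psi \Delta^{it}_{\eta_\cB,\psi_\cB}$ up to cocycles, which is where 2-positivity/Schwarz enters. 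Second, connecting the interior value to the \emph{difference} of relative entropies (rather than to a single relative entropy) needs the Petz-map structure: the recovery channel $\alpha=\alpha_{\eta,T}^t$ appears precisely because $V_\psi^* \Delta_{\eta_\cA,\psi_\cA} V_\psi$, when unraveled, reproduces the relative modular operator of the pulled-back states composed with the Petz adjoint $T^+$, and one must track the modular flows carefully to see \eqref{eq:main} reduce, in the case $q=2$ and after a further optimization, to the fidelity bound \eqref{eq:dpiimp} stated in the introduction. I expect the holomorphy-and-boundedness lemma (lem.~\ref{lem:1}, lem.~\ref{lem:5}) to be the technically heaviest ingredient; once it is in place the three-lines argument and the Poisson-kernel identification are comparatively routine.
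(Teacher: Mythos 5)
Your proposal has the right raw ingredients (a holomorphic contractive family $\Delta^z_{\eta,\psi;\cA}V_\psi\Delta^{-z}_{\eta,\psi;\cB}$, an interpolation inequality on the strip, a Connes-cocycle derivative for the entropy), but the central mechanism is misidentified and, as written, the argument does not close. The entropy difference is \emph{not} $-2\log\|\Gamma_\psi(1/2)\|$ in any norm; evaluating a three-lines inequality \emph{at} the boundary point $z=1/2$ is vacuous (the harmonic measure degenerates to a delta function there, and the kernel $p(t)=\pi[1+\cosh(2\pi t)]^{-1}$ is not the Poisson kernel at $z=1/2$). The correct statement is \eqref{eq:limit}: the entropy difference is the \emph{derivative at the left edge}, $\lim_{\theta\to 0^+}\bigl(-\tfrac{1}{\theta}\log\|\Gamma_\psi(\theta)\|^2_{p_\theta,\psi}\bigr)$, taken along an interpolating exponent family $p_\theta$ with $p_0=2$, $p_{1/2}=q$. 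One then applies the Hirschman-type inequality (lem.~\ref{lem:hirsch}) at $z=\theta$, notes that the left-edge term is harmless because $\|\Gamma_\psi(it)\|_{2,\psi}\le 1$, and observes that the right-edge weight $2\theta\beta_\theta(t)$ vanishes linearly in $\theta$ with $\beta_\theta(t)\to p(t)$; dividing by $\theta$ and letting $\theta\to 0^+$ simultaneously produces the derivative on the left (the entropy difference, via the cocycle computation you gesture at) and the finite kernel $p(t)$ on the right. Your proposal contains none of this limiting structure, and without it the Poisson average in \eqref{eq:main} cannot be reached. A second concrete error: the intertwining $\Delta^{it}_{\eta_\cA,\psi_\cA}V_\psi=V_\psi\Delta^{it}_{\eta_\cB,\psi_\cB}$ ``up to cocycles'' that you invoke for holomorphy is false for a general Schwarz map (it would force $\Gamma_\psi(z)$ to be constant). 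The actual input is the operator inequality $V_\psi^*\Delta_{\eta,\psi;\cA}V_\psi\le\Delta_{\eta,\psi;\cB}$ from Kadison's property, upgraded to $\|\Delta^{\alpha/2}_{\eta,\psi;\cA}V_\psi\Delta^{-\alpha/2}_{\eta,\psi;\cB}\|\le 1$ by L\"owner's theorem and Jensen's operator inequality (lem.~\ref{lem:1}); the Petz map plays no role in Theorem~\ref{thm:1} at all (it enters only in Theorem~\ref{thm:2}).

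Beyond this, several technical layers that the proof actually requires are absent: the control of the $L_{p_\theta}$-norm of the normalized vector near $\theta=0$ (the squeeze and ``first law'' lemmas, lem.~\ref{lem:squeeze} and lem.~\ref{lem:firstlaw}, which show the $B$-term in \eqref{eq:AB} vanishes -- without this the derivative of the $p_\theta$-dependence of the norm could contribute); the spectral cutoff $\Pi_\Lambda$ and the majorization hypothesis $\omega_\psi\le c\,\omega_\eta$ needed to make $\Pi_\Lambda|\Gamma_\psi(z)\rangle$ holomorphic in a two-sided neighbourhood of $z=0$ so the derivative exists (lem.~\ref{lem:5}, lem.~\ref{lem:2}); and the removal of that hypothesis by Gaussian regularization together with the semicontinuity of the relative entropy and the convergence of the $L_q$-norms. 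These are not routine afterthoughts -- the majorization step is exactly what makes the $\theta$-derivative at the left edge well defined.
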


The proof is given in sec. \ref{sec:5} and is similar to that of thm. 1, paper I  \cite{Faulkner:2020iou}. We will therefore only provide complete details where the proof deviates significantly. 

\medskip
\noindent
{\bf Remarks:}
1) The non-commutative H\" older spaces on a v. Neumann algebra $\cM$ weighted by a state $\phi$, called $L_p(\cM, \phi)$, 
were defined and analyzed in \cite{AM} (for $p \in [1,\infty]$). Their definition is recalled for convenience in app. B. 
A recent reference including some inequalities which we will use is \cite{Berta2} (see also \cite{JencovaLp1}).
A variational characterization of the $L_p$ norms has been given by \cite{Hollands2}.

2) In paper I, this result was established in the case when $T=\iota: \cB \subset \cA$ is an inclusion of v. Neumann algebras.

\medskip
\noindent
To state our second main theorem, we define 
\ben\label{eq:recov}
\alpha =   \int_{-\infty}^\infty p(t) \alpha_{\eta,T}^t \, \dd t, 
\een
where $\alpha_{\eta,T}^t$ is the rotated Petz map associated with $T$ and $|\eta\rangle$ as discussed in sec. \ref{subsec:petz}. This map
is a 2-positive channel -- called ``recovery channel'' -- provided that $T$ has these properties.
We also define the ``square root'' fidelity $F=F_\cA$ between two normal states on $\cA$ as usual by
\ben
 \label{eq:Fchar}. 
\| \zeta \|_{1,\psi,\cA'} = \sup\{ |\langle \zeta | a' \psi \rangle | : a' \in \cA', \|a'\|=1 \} = F_{\cA}(\omega_\zeta, \omega_\psi).
\een

\begin{theorem}\label{thm:2}
Let $T: \cB \to \cA$ be a unital, normal, and 2-positive map between two v. Neumann algebras. Let $\omega_\eta,\omega_\psi$ be normal 
state functionals on $\cA$ with $|\eta\rangle$ faithful. Then
\ben
S(\omega_\psi | \omega_\eta) - S(\omega_\psi \circ T | \omega_\eta \circ T)
 \ge -\log F(\omega_\psi | \omega_\psi \circ T \circ  \alpha)^2.
\een
\end{theorem}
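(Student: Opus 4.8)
The plan is to derive Theorem \ref{thm:2} from Theorem \ref{thm:1} by a Jensen-type argument, exactly paralleling the passage from thm. 1 to thm. 2 in paper I. The bridge between the two statements is the observation that the rotated Petz map $\alpha_{\eta,T}^t$ is precisely engineered so that the operator $\Delta_{\eta_\cA,\psi_\cA}^{1/2+it} V_\psi \Delta_{\eta_\cB,\psi_\cB}^{-1/2-it}$ appearing in $|\Gamma_\psi(1/2+it)\rangle$ acts, on the vector $|\xi_\psi^\cB\rangle$, as an ``adjoint'' of $\alpha_{\eta,T}^t$ composed with the natural cone representer. First I would unpack the defining identity \eqref{eq:Petzdef} for $\alpha_{\eta,T}^t$ and, together with the definition \eqref{eq:Vdef} of $V_\psi$, show that the $L_q$-norm $\|\Gamma_\psi(1/2+it)\|_{q,\psi}$ is bounded (at $q=1$) by the quantity $\|\,\cdot\,\|_{1,\psi,\cA'}$ evaluated at a vector representing $\omega_\psi\circ\alpha_{\eta,T}^t$ relative to $\omega_\psi$; by the characterization \eqref{eq:Fchar}, that $L_1$-norm is exactly a fidelity $F(\omega_\psi\circ T\circ\alpha_{\eta,T}^t,\omega_\psi)$-type expression. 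This is the step where the ``$V_\psi$ picture'' of Theorem \ref{thm:1} gets translated into the ``recovery channel'' picture of Theorem \ref{thm:2}.

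Next I would specialize Theorem \ref{thm:1} to $q=1$, giving
\ben
S(\omega_\psi | \omega_\eta) - S(\omega_\psi \circ T | \omega_\eta \circ T) \ge - \int_{-\infty}^\infty p(t) \log \| \Gamma_\psi(1/2+it)\|_{1,\psi}^2 \, \dd t .
\een
Using the previous step, $\| \Gamma_\psi(1/2+it)\|_{1,\psi}^2 \le F(\omega_\psi \mid \omega_\psi\circ T\circ\alpha_{\eta,T}^t)^2$, so the right side is bounded below by $-\int p(t)\log F(\omega_\psi\mid\omega_\psi\circ T\circ\alpha_{\eta,T}^t)^2\,\dd t$. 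It then remains to pull the $t$-integral inside. Concavity of the logarithm together with $\int p(t)\,\dd t = 1$ gives, by Jensen,
\ben
-\int_{-\infty}^\infty p(t)\log F(\omega_\psi \mid \omega_\psi\circ T\circ\alpha_{\eta,T}^t)^2\,\dd t \ge -\log\left(\int_{-\infty}^\infty p(t)\, F(\omega_\psi \mid \omega_\psi\circ T\circ\alpha_{\eta,T}^t)^2\,\dd t\right),
\een
and the final ingredient is concavity (joint concavity, or at least concavity in one slot) of the squared fidelity $F(\omega_\psi\mid\,\cdot\,)^2$ together with linearity of $\omega_\psi\circ T\circ(\,\cdot\,)$ in the channel argument, which yields
\ben
\int_{-\infty}^\infty p(t)\, F(\omega_\psi \mid \omega_\psi\circ T\circ\alpha_{\eta,T}^t)^2\,\dd t \le F\!\left(\omega_\psi \;\middle|\; \omega_\psi\circ T\circ\textstyle\int p(t)\alpha_{\eta,T}^t\,\dd t\right)^2 = F(\omega_\psi \mid \omega_\psi\circ T\circ\alpha)^2,
\een
recalling the definition \eqref{eq:recov} of $\alpha$. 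Chaining these inequalities produces the claimed bound.

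The step I expect to be the main obstacle is the first one: rigorously identifying $\|\Gamma_\psi(1/2+it)\|_{1,\psi}$ with the fidelity $F(\omega_\psi\mid\omega_\psi\circ T\circ\alpha_{\eta,T}^t)$ in the full von Neumann algebra generality. One must check that $|\Gamma_\psi(1/2+it)\rangle$ is genuinely a vector in $\sH$ (this is where lem. \ref{lem:1} and lem. \ref{lem:5} are invoked for domain questions about the unbounded operators $\Delta_{\eta_\cA,\psi_\cA}^{1/2+it}$ and $\Delta_{\eta_\cB,\psi_\cB}^{-1/2-it}$), that its pairing against $\cA'$-elements reproduces the functional $\omega_\psi\circ T\circ\alpha_{\eta,T}^t$ against $\cA$-elements via the defining relation \eqref{eq:Petzdef}, and that non-faithfulness of $\omega_\psi$ (as opposed to $\omega_\eta$, which is assumed faithful) is handled by the support-projection reductions recalled in sec. \ref{subsec:petz} and the modified definition \eqref{eq:Vdef1} of $V_\psi$. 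Also, concavity of $\omega\mapsto F(\omega_\psi\mid\omega)^2$ in the general von Neumann setting should be cited (it follows from the variational formula \eqref{eq:Fchar}, since a supremum of affine functionals of $\omega$ — through the GNS/natural-cone dependence — is convex, and one uses the correct affine dependence; alternatively it is standard that fidelity is jointly concave). Since the paper says the proof is similar to paper I, I would present the Jensen/concavity chain in full but refer to paper I for the detailed verification of the domain and representer facts underlying the first step.
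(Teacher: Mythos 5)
Your overall architecture is the paper's: specialize Theorem \ref{thm:1} to $q=1$, bound $\|\Gamma_\psi(1/2+it)\|_{1,\psi}$ by a fidelity involving the rotated Petz map, and then use Jensen/concavity to pull the $t$-average inside, recovering $\alpha=\int p(t)\,\alpha^t_{\eta,T}\,\dd t$. But the step you yourself flag as ``the main obstacle'' is where the genuine gap lies, and your description of what needs to be proved there is not quite the right statement. You propose to \emph{identify} $|\Gamma_\psi(1/2+it)\rangle$ as a vector whose pairing against $\cA'$ ``reproduces the functional $\omega_\psi\circ T\circ\alpha^t_{\eta,T}$''. For a general 2-positive channel this identification is false: it holds (as an equality) only when $T$ is a homomorphism, i.e.\ in the inclusion case of paper I, so deferring to paper I for this verification does not work. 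What is actually true, and what the paper proves (lem.~\ref{lem:mon}), is the \emph{domination} $\gamma_t(a^*a)\le \omega_\psi\circ T\circ\alpha^t_{\eta,T}(a^*a)$ for the positive functional $\gamma_t$ induced by $|\Gamma_\psi(1/2+it)\rangle$. The proof of that inequality needs two ingredients you have not supplied: (i) the auxiliary contraction $V_\eta$ together with Petz's intertwining relation $V_\eta S_{\eta_\cB,\psi_\cB}\subset S_{\eta_\cA,\psi_\cA}V_\psi$, which converts the boundary value \eqref{eq:Gupper} into the bounded-operator form $\Delta^{it}_{\eta_\cA,\psi_\cA}J_\cA V_\eta J_\cB\Delta^{-it}_{\eta_\cB,\psi_\cB}|\xi^\cB_\psi\rangle$ (this, not lem.~\ref{lem:1}/\ref{lem:5} alone, is what makes the comparison with \eqref{eq:Petzdef} possible); and (ii) the Kadison--Schwarz inequality $T(B^*)T(B)\le T(B^*B)$, which is precisely the point where 2-positivity beyond mere positivity enters and where the inequality (rather than an equality) originates. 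Once the domination is in hand, one concludes $\|\Gamma_\psi(1/2+it)\|_{1,\psi}=F(\gamma_t|\omega_\psi)\le F(\omega_\psi\circ T\circ\alpha^t_{\eta,T}|\omega_\psi)$ by monotonicity of the fidelity in a dominated argument --- a step absent from your outline because you expected equality.

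Two smaller remarks. First, your Jensen chain uses concavity of the squared fidelity $F(\omega_\psi|\cdot)^2$ in one slot; the paper instead uses concavity of $F$ itself (after writing $-\log F^2=-2\log F$), which is the safer choice since joint concavity holds for $F$ but not for $F^2$, and separate concavity of $F^2$ in the general v.~Neumann setting would itself need justification. Second, the expression $\omega_\psi\circ\alpha^t_{\eta,T}$ in your first step does not typecheck ($\alpha^t_{\eta,T}$ maps $\cA\to\cB$ while $\omega_\psi$ lives on $\cA$); presumably you mean $\omega_\psi\circ T\circ\alpha^t_{\eta,T}$.
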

\noindent
 {\bf Remarks:}
1) In the case of v. Neumann algebras of type I, our result reduces to that by Junge et al. \cite{Junge}. \\
2) The case of an inclusion $T=\iota :\cB \subset \cA$ was treated separately already in paper I.\\
3) By one of the Fuchs-van-der-Graff inequalities, see e.g. paper I  \cite{Faulkner:2020iou}, lem. 3 (3) for the case of general v. Neumann algebras, 
and the elementary bound $-\log x \ge 1-x, x \in (0,1]$, we obtain an upper bound on the norm distance between the original and recovered state:
\ben
S(\omega_\psi | \omega_\eta) - S(\omega_\psi \circ T | \omega_\eta \circ T)
 \ge \tfrac{1}{4} \| \omega_\psi \circ T \circ  \alpha - \omega_\psi\|^2.
\een
\medskip

\begin{proof}
Consider the linear operator $V_\eta: \sK \to \sH$ defined by 
\ben
V_\eta(\pi^\cB(\psi) b|\xi_{\eta}^{\cB} \rangle+ |\zeta\rangle) = \pi^\cA(\psi)T(b)|\xi_{\eta}^{\cB}\rangle.
\een 
Here, $b \in \cB, |\zeta\rangle \in [\pi^\cB(\psi) \cB \xi_{\eta}^{\cB}]^\perp$. 
It has been shown in \cite{Petz4}, thm. 4 that $\|V_\eta \| \le 1$, which uses the 2-positivity of $T$, 
and that $V_\eta S_{\eta_\cB, \psi_\cB} \subset S_{\eta_\cA, \psi_\cA} V_\psi$, where 
we mean the closures of the Tomita operators. We find:

\ben\label{eq:Gupper1}
\begin{split}
|\Gamma_\psi(1/2+it)\rangle
=&\Delta_{\eta_\cA,\psi_\cA}^{1/2 + it} V_{\psi} \Delta_{\eta_\cB,\psi_\cB}^{-1/2 -it} |\xi_{\psi}^{\cB}\rangle\\
=&\Delta_{\eta_\cA,\psi_\cA}^{it} J_\cA S_{\eta_\cA, \psi_\cA} V_{\psi} \Delta_{\eta_\cB,\psi_\cB}^{-1/2 -it} |\xi_{\psi}^{\cB}\rangle\\
=&\Delta_{\eta_\cA,\psi_\cA}^{it} J_\cA V_\eta S_{\eta_\cB, \psi_\cB} \Delta_{\eta_\cB,\psi_\cB}^{-1/2 -it} |\xi_{\psi}^{\cB}\rangle\\
=&\Delta_{\eta_\cA,\psi_\cA}^{it} J_\cA V_{\eta} J_{\cB} \Delta_{\eta_\cB,\psi_\cB}^{-it} |\xi_{\psi}^{\cB}\rangle. 
\end{split}
\een
Let $\gamma_t$ be the non-normalized linear functional on $\cA$ 
corresponding to $|\Gamma(1/2+it)\rangle$. We claim, compare paper I  \cite{Faulkner:2020iou}, thm. 4 (2):

\begin{lemma}
\label{lem:mon}
We have $\gamma_t(a^* a) \le \omega_\psi \circ T \circ \alpha^t_{\eta,T}(a^* a)$ for $t \in \RR, a \in \cA$.
\end{lemma}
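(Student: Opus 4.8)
The plan is to compute the functional $\gamma_t$ explicitly in terms of the rotated Petz map and then exhibit the inequality as an instance of the Schwarz property. First I would use the last line of \eqref{eq:Gupper1}, namely $|\Gamma_\psi(1/2+it)\rangle = \Delta_{\eta_\cA,\psi_\cA}^{it} J_\cA V_{\eta} J_{\cB} \Delta_{\eta_\cB,\psi_\cB}^{-it} |\xi_{\psi}^{\cB}\rangle$, to write out $\gamma_t(a^*a) = \langle \Gamma_\psi(1/2+it)| a^*a\, \Gamma_\psi(1/2+it)\rangle = \| a\, \Gamma_\psi(1/2+it)\|^2$. Substituting the expression for $|\Gamma_\psi(1/2+it)\rangle$ and using unitarity of $\Delta_{\eta_\cA,\psi_\cA}^{it}$ and $J_\cA$, this becomes $\| J_\cA a^* J_\cA \, V_\eta J_\cB \Delta_{\eta_\cB,\psi_\cB}^{-it} \xi_\psi^\cB \|^2$, where I have moved $J_\cA a^* a J_\cA$ past $\Delta_{\eta_\cA,\psi_\cA}^{it}$ after conjugating appropriately — more carefully, $\langle \Gamma| a^*a|\Gamma\rangle = \langle J_\cB \Delta^{-it}_{\eta_\cB,\psi_\cB}\xi_\psi^\cB | V_\eta^* J_\cA \Delta^{-it}_{\eta_\cA,\psi_\cA} a^* a \Delta^{it}_{\eta_\cA,\psi_\cA} J_\cA V_\eta | J_\cB \Delta^{-it}_{\eta_\cB,\psi_\cB}\xi_\psi^\cB\rangle$, and $\Delta^{-it}_{\eta_\cA,\psi_\cA} a \Delta^{it}_{\eta_\cA,\psi_\cA} = \varsigma^{-t}_{\eta,\cA}(a)$ lies in $\cA$ (using that $\Delta_{\eta_\cA,\psi_\cA}^{it} = \Delta_{\eta_\cA}^{it}(D\psi:D\eta)_{-t}$ or the cocycle identities relating relative and ordinary modular flows, so conjugation by $\Delta^{it}_{\eta_\cA,\psi_\cA}$ implements $\varsigma$ up to a cocycle which is absorbed into $V_\eta$ on the appropriate side — I would check this step against paper I's thm. 4 where the analogous computation is carried out).

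The key structural input is the intertwining/contraction property of $V_\eta$ recorded just before the lemma: $\|V_\eta\|\le 1$ and $V_\eta S_{\eta_\cB,\psi_\cB} \subset S_{\eta_\cA,\psi_\cA} V_\psi$. The adjoint $V_\eta^*:\sH\to\sK$ should be identifiable: since $V_\eta(\pi^\cB(\psi) b \xi_\eta^\cB + \zeta) = \pi^\cA(\psi) T(b)\xi_\eta^\cB$, one expects $V_\eta^* J_\cA a J_\cA V_\eta$ to relate to $J_\cB T^+(\cdot) J_\cB$ — precisely, for $a \in \cA$, $V_\eta^* (J_\cA a J_\cA) V_\eta$ acting on $\cB \xi_\eta^\cB$ should produce $J_\cB T^+(a) J_\cB$ or a compression thereof, which is the content of Petz's construction of $T^+$ via the KMS inner product \eqref{eq:Petzdef}. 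Thus after the dust settles, $\gamma_t(a^*a) = \langle \Delta^{-it}_{\eta_\cB,\psi_\cB}\xi_\psi^\cB | J_\cB T^+(\varsigma^{-t}_{\eta,\cA}(a^*a)) J_\cB | \Delta^{-it}_{\eta_\cB,\psi_\cB}\xi_\psi^\cB\rangle$ up to replacing $T^+$ by $V_\eta^* (\cdot) V_\eta$, and the contraction bound $\|V_\eta\|\le 1$ combined with the Schwarz inequality $V_\eta^*(J_\cA a^* J_\cA)(J_\cA a J_\cA) V_\eta \le V_\eta^* J_\cA a^*a J_\cA V_\eta$ — wait, that needs $2$-positivity of $b\mapsto V_\eta^* b V_\eta$, which holds automatically for any $V_\eta$ — gives the one-sided bound. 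Then collecting the modular flows $\varsigma^{-t}_{\eta,\cA}$ on the $\cA$ side and $\varsigma^t_{\eta,\cB} = {\rm Ad}\,\Delta^{it}_{\eta_\cB}$ on the $\cB$ side (here using that $\Delta_{\eta_\cB,\psi_\cB}^{-it}$ and $\Delta_{\eta_\cB}^{-it}$ differ by a cocycle that moves the state from $\psi_\cB$ to the reference, and that $J_\cB \Delta_{\eta_\cB}^{it} = \Delta_{\eta_\cB}^{-it} J_\cB$) reconstitutes exactly $\omega_{\psi}\circ T \circ (\varsigma^t_{\eta,\cB}\circ T^+ \circ \varsigma^{-t}_{\eta,\cA})(a^*a) = \omega_\psi \circ T \circ \alpha^t_{\eta,T}(a^*a)$.

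The main obstacle I anticipate is the bookkeeping of non-faithfulness and support projections: $\omega_\psi$ need not be faithful on $\cA$, so $V_\eta$ is only defined modulo $[\pi^\cB(\psi)\cB\xi_\eta^\cB]^\perp$, and one must check that all the partial isometries/cocycles $(D\psi:D\eta)_t$ are well-defined and that the projections $\pi^\cA(\psi), \pi^\cB(\psi)$ commute past the modular flows $\varsigma^t_{\eta,\cA}, \varsigma^t_{\eta,\cB}$ as needed (they do, since $\pi^\cA(\psi)$ is fixed by $\varsigma_{\eta,\cA}$ when... actually this requires care and is presumably where paper I's argument is invoked). A secondary subtlety is justifying that $|\Gamma_\psi(1/2+it)\rangle$ genuinely lies in the domain where it defines a bounded positive functional $\gamma_t$ on $\cA$ — this is asserted to follow from lem. \ref{lem:1} and lem. \ref{lem:5}, so I would simply cite those. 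Modulo these technical points, the proof is: \textbf{(i)} rewrite $\gamma_t(a^*a)$ as a squared norm using the last line of \eqref{eq:Gupper1}; \textbf{(ii)} recognize $V_\eta^* J_\cA(\cdot)J_\cA V_\eta$ as ($J_\cB$-conjugated) $T^+$ up to a Schwarz-type inequality coming from $\|V_\eta\|\le 1$ and the trivial $2$-positivity of $b\mapsto V_\eta^* b V_\eta$; \textbf{(iii)} identify the leftover modular unitaries as the flows $\varsigma^{\mp t}$ that define $\alpha^t_{\eta,T}$; \textbf{(iv)} conclude $\gamma_t(a^*a)\le \omega_\psi\circ T\circ\alpha^t_{\eta,T}(a^*a)$.
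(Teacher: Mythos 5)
Your outline is the paper's own argument run in reverse: the paper starts from the defining identity \eqref{eq:Petzdef} of $\alpha^t_{\eta,T}$ with $b=B^*B$, $a=A^*A$, lowers its right-hand side, rewrites the result via $V_\eta$ as $\|A\,\Delta^{it}_{\eta_\cA,\psi_\cA}J_\cA V_\eta J_\cB\Delta^{it}_{\eta_\cB,\psi_\cB}b'\xi_\eta^\cB\|^2$, and only at the very end lets $b'|\xi_\eta^\cB\rangle\to|\xi_\psi^\cB\rangle$; you start from $\gamma_t(a^*a)=\|a\,\Gamma_\psi(1/2+it)\|^2$ and push toward \eqref{eq:Petzdef}. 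The ingredients are identical, so everything hinges on whether your step (ii) closes. As stated it does not: you locate the inequality in a ``Schwarz-type inequality'' for the map $x\mapsto V_\eta^* xV_\eta$, but, as you half-observe yourself, $(J_\cA a^*J_\cA)(J_\cA aJ_\cA)=J_\cA a^*aJ_\cA$ exactly, so positivity of $V_\eta^*(\,\cdot\,)V_\eta$ produces an equality there and yields no bound at all. The inequality in the paper's proof comes from two other places: the Kadison--Schwarz property of $T$ itself, $T(B^*)T(B)\le T(B^*B)$, applied to the elements $B=J_\cB\Delta^{it}_{\eta_\cB}b'\Delta^{-it}_{\eta_\cB}J_\cB\in\cB$ with $b'\in\cB'$, together with discarding the support projection $\pi^\cA(\psi)\le 1$ that is built into $V_\eta$. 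This is precisely what separates $\langle V_\eta B\xi_\eta^\cB|\,\cdot\,\,V_\eta B\xi_\eta^\cB\rangle$, which carries $T(B^*)\pi^\cA(\psi)T(B)$, from the quantity $T(B^*B)$ that the Petz-map identity \eqref{eq:Petzdef1} actually controls; without invoking the Schwarz property of $T$ at this point, $V_\eta^*J_\cA(\,\cdot\,)J_\cA V_\eta$ cannot be dominated by the $J_\cB$-conjugated $T^+$.

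A second, lesser point: the support and domain worries you flag at the end are resolved in the paper precisely by establishing the inequality on the dense set of vectors $b'|\xi_\eta^\cB\rangle$ and then approximating $|\xi_\psi^\cB\rangle$ in norm, all operators involved being bounded and $V_\eta$ a contraction. You would need this density-and-limit device rather than substituting $|\xi_\psi^\cB\rangle$ directly, since $\Delta^{-it}_{\eta_\cB,\psi_\cB}$ is only a partial isometry when $\psi_\cB$ is not faithful. With the inequality re-attributed to the Schwarz property of $T$ and the limiting argument supplied, your outline coincides with the paper's proof.
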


\begin{proof}
In \eqref{eq:Petzdef}, set $b=B^* B, a=A^*A$. We get:
\ben
\label{eq:Petzdef1}
\langle  \Delta_{\eta_\cB}^{-it} J_\cB B \xi_{\eta}^{\cB} |  \alpha_{\eta,T}^t(A^*A) \Delta_{\eta_\cB}^{-it} J_\cB B \xi_{\eta}^{\cB} \rangle 
= \langle \Delta_{\eta_\cA}^{-it} J_\cA A \xi_{\eta}^{\cA} | T(B^* B) \Delta_{\eta_\cA}^{-it} J_\cA A \xi_{\eta}^{\cA} \rangle. 
\een
Now, for $b' \in \cB'$, and the state $\psi$ on $\cA$, we choose 
\ben
B:= J_\cB \Delta^{it}_{\eta_\cB} b' \Delta^{-it}_{\eta_\cB} J_\cB. 
\een
Then \eqref{eq:Petzdef1} becomes:
\ben
\begin{split}
&\langle  b' \xi_{\eta}^{\cB} |  \pi^\cB(\psi) \alpha_{\eta,T}^t(A^*A)  \pi^\cB(\psi) b'\xi_{\eta}^{\cB} \rangle \\
&\ge \langle \Delta_{\eta_\cA}^{-it} J_\cA A \xi_{\eta}^{\cA} | T(B^*)T(B) \Delta_{\eta_\cA}^{-it} J_\cA A \xi_{\eta}^{\cA} \rangle \\
&\ge \langle \Delta_{\eta_\cA}^{-it} J_\cA A \xi_\eta | T(B^*) \pi^\cA(\psi) T(B) \Delta_{\eta_\cA}^{-it} J_\cA A \xi_{\eta} \rangle \\
&= \langle \pi^\cA(\psi) T(B) \xi_{\eta}^{\cA} | \Delta_{\eta_\cA}^{-it} J_\cA A^* A J_\cA \Delta_{\eta_\cA}^{it} \pi^\cA(\psi) T(B) \xi_{\eta}^{\cA} \rangle \\
&= \langle V_\eta B \xi_{\eta}^{\cB} | J_\cA\Delta_{\eta_\cA,\psi_\cA}^{-it}  A^* A  \Delta_{\eta_\cA,\psi_\cA}^{it} J_\cA V_\eta B \xi_{\eta}^{\cB} \rangle \\
&= \langle \Delta_{\eta_\cA,\psi_\cA}^{it} J_\cA 
V_\eta J_\cB \Delta^{it}_{\eta_\cB,\psi_\cB}b' \xi_{\eta}^{\cB} |   (A^* A) \Delta_{\eta_\cA,\psi_\cA}^{it} J_\cA 
V_\eta J_\cB \Delta^{it}_{\eta_\cB,\psi_\cB}b' \xi_{\eta}^{\cB} \rangle.
\end{split}
\een
The set of vectors $b' |\xi_{\eta}^{\cB} \rangle$ is dense and hence can be used to approximate
$|\xi_{\psi}^{\cB} \rangle$ in norm. Then, since all operators in the inequality are bounded (in particular $V_\eta$, which is a contraction), 
this gives
\ben
\gamma_t(A^*A) \le \langle  \xi_{\psi}^{\cB} |  \alpha_{\eta,T}^t(A^*A)  \xi_{\psi}^{\cB} \rangle,
\een
which is equivalent to the desired bound.
\end{proof}

By this lemma, together with paper I  \cite{Faulkner:2020iou}, lem. 3 (1), and the monotonicity of the fidelity \cite{Uhlmann1977} 
we have $\| \Gamma_\psi(1/2+it)\|_{1,\psi} = F(\gamma_t|\omega_\psi) \le F(\omega_\psi \circ T \circ \alpha_{\eta,T}^t | \omega_\psi)$, with 
$L_1$-norm relative to $\cA'$ and $\psi$.  Thm.  \ref{thm:1} for the case $q=1$ therefore gives us
\ben
\label{eq:main1}
S(\omega_\psi | \omega_\eta) - S(\omega_\psi \circ T | \omega_\eta \circ T)
\ge 
-2 \int_{-\infty}^\infty p(t) \log F(\omega_\psi \circ T \circ \alpha_{\eta,T}^t|\omega_\psi)
\, \dd t . 
\een
Since ln and $F$ are concave, we can further use the Jensen inequality to pull the integral inside the fidelity, see 
paper I  \cite{Faulkner:2020iou}, lem. 10 and proof of thm. 2 for the identical details of these arguments. This completes the proof.
\end{proof}

\noindent
{\bf Remark:} The attentive reader will have noticed that thm. \ref{thm:1} can evidently 
be combined with lem. \ref{lem:mon} also for $q \in (1,2)$. 
Indeed, by \cite{Hollands2}, cor. 2
\ben
\omega_{\zeta_1}(aa^*) \ge \omega_{\zeta_2}(aa^*) \quad \forall a \in \cA \quad 
\Longrightarrow \quad \| \zeta_1 \|_{q,\psi,\cA'} \ge \| \zeta_2 \|_{q,\psi,\cA'}. 
\een
Therefore, lem. \ref{lem:mon} 
and thm. \ref{thm:1} with $q=2s$ give
\ben
\label{eq:Jensen}
S(\omega_\psi | \omega_\eta) - S(\omega_\psi \circ T | \omega_\eta \circ T)
\ge \frac{1-s}{s}  \int_{-\infty}^\infty p(t)  D_{s} ( \omega_\psi \circ T \circ \alpha_{\eta,T}^t | \omega_\psi)
\, \dd t .
\een
Here, $D_s, s \in (1/2,1)$ are the ``sandwiched Renyi divergences'' \cite{mueller}, defined by 
\ben
D_s(\omega_\zeta | \omega_\psi) = (s-1)^{-1} \, \log \| \zeta \|_{2s,\psi,\cA'}^{2s}
\een
 (norm relative to $\cA'$). 
We know  \cite{Berta2}, thm. 14 or \cite{Hollands2}, cor. 3 that these satisfy the 
data processing inequalitiy. Applying this inequality to the completely positive channel 
$ \cA \to \cA \oplus \cdots \oplus \cA, a \mapsto a \oplus \cdots \oplus a$ and the 
states $\rho = \oplus_i \lambda_i \omega_{\psi_i}, \sigma = \oplus_i \lambda_i \omega_{\zeta_i}$ implies that $D_s$ is 
jointly convex by a standard argument, see e.g. \cite{mueller}, proof of prop. 1, 
\ben
\sum_i \lambda_i D_s(\omega_{\zeta_i} | \omega_{\psi_i}) \ge D_s(\sum_i \lambda_i \omega_{\zeta_i} | \sum_j \lambda_j \omega_{\psi_j})
\een
where the sum is finite and $\lambda_i \ge 0, \sum \lambda_i=1$.
This suggests that we can use again the 
Jensen inequality to pull the integral inside $D_s$ in \eqref{eq:Jensen}. We would get, for $s \in (1/2,1)$
\ben
S(\omega_\psi | \omega_\eta) - S(\omega_\psi \circ T | \omega_\eta \circ T)
\ge \frac{1-s}{s} D_{s} (\omega_\psi \circ T \circ \alpha | \omega_\psi ). 
\een
The right side has an operational meaning in terms of hypothesis testing, see \cite{om}. We omit the details since \eqref{eq:main1} seems more fundamental anyhow.

\section{Examples}\label{sec:4}

\subsection{Relation with conditional expectation}

Let $\cB \subset \cA$ be an inclusion of v. Neumann factors.
Consider the reference state $\sigma$ fixed by a conditional expectation $E: \cA \to \cB$. 
We work out the estimate in Thm. \ref{thm:2} in this case. The rotated Petz map reads (see \cite{Faulkner:2020iou} and also \cite{AC}, \cite{Petz1993})
\ben
\alpha^t_\sigma = 
\iota \circ \varsigma_{\sigma; \cB}^{t} \circ j_{\cB} {\rm Ad}_{V^\star}  j_{\cA} \circ \varsigma_{\sigma; \cA}^{-t} 
\een
Here $j_\cA = {\rm Ad} J_\cA$ and $\varsigma_{\sigma,\cA}^t$ is the modular flow of $\sigma$ on $\cA$ etc.
In the above expression $\cB$ is represented on a different Hilbert space
since we do not always have a common cyclic and separating vector for $\mathcal{A},\mathcal{B}$. 
In particular in the case where $\sigma$ is fixed by a conditional expectation, we do not have such a vector.
$V$ is the isometry that relates the two Hilbert spaces
and is defined via $V(b \left| \eta_{\mathcal{B}} \right>) = \iota(b) \left| \eta_{\mathcal{A}} \right>$ (following \cite{Faulkner:2020iou} we distinguish these representations by writing explicitly the inclusion
$\iota(\mathcal{B}) \subset \mathcal{A}$.) If $\sigma$ is fixed by $E$ then we can dramatically simplify the recovery map 
using Takesaki's theorem \cite{Takesaki2} to cancel all the modular operators.
In particular one finds:
\ben
\alpha^t_\sigma = \iota \circ {\rm Ad}_{V^\star} = E
\een
See, for example, Eq~3.3-3.5 of \cite{Faulkner:2020hzi} for more details on this case (to compare to this paper send $\iota \rightarrow \beta$ and $\mathcal{A}, \mathcal{B} \rightarrow \mathcal{M} , \mathcal{N}$). 
The integral in the recovery channel  \eqref{eq:recov} is now trivial. It is well known that
the Petz map for an inclusion reduces to a conditional expectation if the defining state is fixed by a conditional expectation \cite{Petz1}. This remains true for the flowed Petz map. 
So the estimate becomes:
\begin{equation}
\label{est5}
S(\rho \, | \, \sigma \circ E) - S(\rho|_{\cB} \, | \, \sigma|_{\cB})  \geq - \log F( \rho \circ E, \rho)^2
\end{equation}
Notice however that this estimate can be derived in a more elementary manner. 
We have the well known equality (see \cite{Petz1993}):
\begin{equation}
S(\rho \, | \, \sigma \circ E) - S(\rho|_{\cB} \, | \, \sigma|_{\cB})
= S(\rho \, | \rho \circ E) 
\end{equation}
So \eqref{est5} simply follows from the monotonicity of the ``sandwiched Renyi divergence''
as a function of the Renyi parameter $s$ \cite{Berta2}. 

\subsection{Markov semi-groups}

Let $\{T_t\}_{t\ge 0}: \cA \to \cA$ be a weakly $^*$-continuous semi-group of 2-positive unital maps such that $T_0=id$, leaving invariant a reference state $\sigma$.  On physical 
grounds, further conditions are often imposed on such a ``Markov semi-group'' expressing a kind of {\it detailed balance 
condition}, see \cite{Koss,Maj1,Maj2}. One such condition that is natural from a physical viewpoint in many cases is that 
 \ben
 \sigma(a^* T_t(b))= \sigma(\Theta(b^*)T_t \circ \Theta(a))\quad (t \ge 0),
 \een 
where $\Theta$ is some anti-linear anti-automorphism\footnote{This means $\Theta(\lambda a)=\bar \lambda \Theta(a), 
\Theta(a)\Theta(b)=\Theta(ba)$ for $a,b \in \cA.$ It actually suffices that $\Theta$ is a Jordan map, see 
\cite{Maj} for details.} of $\cA$ leaving $\sigma$ invariant, 
such that $\Theta^2 = id$, often having the interpretation 
of a parity or PCT map. This can be seen to imply that $T_t$ commutes with the modular group $\varsigma^s_\eta$ of $(\cA,\sigma)$, 
see \cite{Maj}, lem. 4.8. This further gives (by analytic continuation and the KMS property)
\ben
\sigma(a^* \varsigma_\sigma^{-s+i/2} \circ T_t \circ \varsigma_\sigma^s (b)) = 
\sigma(a^*T_t \circ \varsigma_\sigma^{+i/2} (b)) = \sigma(\Theta \circ T_t \circ \Theta(a^*) \varsigma_\sigma^{+i/2} (b)), 
\een
showing in view of \eqref{eq:Petzdef} that the rotated Petz map of $T_t$ is given by $\Theta \circ T_t \circ \Theta$. 

Thus, in our case,  the recovery channel \eqref{eq:recov} for $T_t$ is (for given $t \ge 0$)
$
\alpha = \Theta \circ T_t \circ \Theta , 
$
whereas the formula for the entropy production by the Markov process up to time $t$ 
provided by thm. \ref{thm:2} in this case given in the following 
proposition:
\begin{proposition}
Let $\{T_t\}_{t\ge 0}: \cA \to \cA$ be a Markov semi-group leaving invariant a normal state $\sigma$ of the v. Neumann algebra $\cA$
fulfilling the above detailed balance condition.
Then for all $t \ge 0$,  
\ben
S(\rho | \sigma) - S(\rho \circ T_t | \sigma) \ge -2 \log F(\rho \circ T_{t} \circ \Theta \circ T_t \circ \Theta, \rho).
\een
\end{proposition}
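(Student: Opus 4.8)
The plan is to apply Theorem~\ref{thm:2} verbatim, taking $\cB = \cA$, the channel $T := T_t$ for the fixed $t \ge 0$ in the statement, and the normal state functionals $\omega_\eta := \sigma$, $\omega_\psi := \rho$. Here $T_t$ is normal, unital and $2$-positive by hypothesis, and $\sigma$ is faithful (this is already presupposed by the detailed balance condition, which invokes the modular flow $\varsigma^s_\sigma$ and a $\sigma$-preserving anti-automorphism $\Theta$; if one does not wish to assume it, restrict first to the support projection of $\sigma$ as in sec.~\ref{subsec:petz}, which alters neither side). Since $\sigma$ is $T_t$-invariant one has $\omega_\eta \circ T = \sigma \circ T_t = \sigma$, so $S(\omega_\psi \circ T | \omega_\eta \circ T) = S(\rho \circ T_t | \sigma)$ and the left side of Theorem~\ref{thm:2} is exactly the left side of the proposition.

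The only thing left to determine is the recovery channel $\alpha$ of \eqref{eq:recov} associated with $T_t$ and $\sigma$. By the discussion preceding the proposition — where the detailed balance condition is shown to force $T_t$ to commute with the modular group $\varsigma^s_\sigma$ of $(\cA,\sigma)$ (see \cite{Maj}, lem.~4.8), so that the two (coinciding) modular flows in $\alpha^s_{\sigma,T_t} = \varsigma^s_\sigma \circ T_t^+ \circ \varsigma^{-s}_\sigma$ cancel, and where the KMS property together with \eqref{eq:Petzdef} identifies the rotated Petz map in the explicit, $s$-independent form $\alpha^s_{\sigma,T_t} = \Theta \circ T_t \circ \Theta$ — and since $p$ is a probability density, the average in \eqref{eq:recov} is trivial: $\alpha = \int_{-\infty}^\infty p(s)\,(\Theta \circ T_t \circ \Theta)\, \dd s = \Theta \circ T_t \circ \Theta$.

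Substituting $\omega_\eta \circ T = \sigma$ and $\alpha = \Theta \circ T_t \circ \Theta$ into the bound of Theorem~\ref{thm:2} then gives
\[
S(\rho | \sigma) - S(\rho \circ T_t | \sigma) \;\ge\; -\log F(\rho \,|\, \rho \circ T_t \circ \Theta \circ T_t \circ \Theta)^2 \;=\; -2\log F(\rho \circ T_t \circ \Theta \circ T_t \circ \Theta,\, \rho),
\]
using $-\log(x^2) = -2\log x$ and the symmetry of the fidelity, which is the claim. So the argument is purely a specialization of Theorem~\ref{thm:2}; the one genuinely substantive input — and the point to be careful about — is the preceding paragraph's identification of the rotated Petz map of a detailed-balance semigroup with $\Theta \circ T_t \circ \Theta$ \emph{independently of} $s$, which is what makes the defining integral of the recovery channel collapse. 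As in the third remark following Theorem~\ref{thm:2}, one may additionally deduce the trace-norm-type bound $S(\rho | \sigma) - S(\rho \circ T_t | \sigma) \ge \tfrac14 \|\rho \circ T_t \circ \Theta \circ T_t \circ \Theta - \rho\|^2$ via the Fuchs--van der Graaf inequality and $-\log x \ge 1-x$.
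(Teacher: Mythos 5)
Your proof is correct and follows essentially the same route as the paper, which states the proposition as a direct specialization of Theorem~\ref{thm:2} after the preceding paragraphs identify the ($s$-independent) rotated Petz map of a detailed-balance semigroup as $\Theta \circ T_t \circ \Theta$, so that the average \eqref{eq:recov} collapses. Your handling of the invariance $\sigma \circ T_t = \sigma$, the faithfulness of $\sigma$, and the symmetry of the fidelity matches what the paper leaves implicit.
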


\section{Proof of thm. \ref{thm:1}}\label{sec:5}

The proof is along similar lines as that of paper I  \cite{Faulkner:2020iou}, thm. 1 but  we take the opportunity to present alternative proofs in some cases
to obtain simplifications. The argument is divided into several steps. First we consider a special case involving an additional assumption. This is then removed in a second step.

We shall first assume  that there 
exists $c>0$ such that 
\ben
\label{eq:major}
\omega_\psi(a) \le c\omega_\eta(a)  \quad (a \in \cA^+).
\een
We will write this also as $ \omega_\psi \le c\omega_\eta $. By definition, since $T$ is positivity-preserving, we then also have 
$\omega_\psi \circ T \le c\omega_\eta \circ T$. Following the conventions introduced above around \eqref{eq:conventions}, we use notations such as 
$|\eta_\cB\rangle, |\eta_\cA\rangle$, in this proof, and similarly for $|\psi\rangle$. In the same spirit, 
we will sometimes write $\Delta_{\eta,\psi;\cB}=\Delta_{\eta_\cB,\psi_\cB}, \Delta_{\eta,\psi;\cA}=\Delta_{\eta_\cA,\psi_\cA}$ etc.

\medskip

Step 1): First, we analyze some properties of the family of operators appearing in \eqref{eq:Gupper}.  The following lemma 
does not use the majorization condition \eqref{eq:major} yet.

\begin{lemma}\label{lem:1}
The operator $\Delta_{\eta,\psi;\cA}^{z} V_\psi \Delta_{\eta,\psi;\cB}^{-z}$ satisfies 
 $\|\Delta_{\eta,\psi;\cA}^{z} V_\psi \Delta_{\eta,\psi;\cB}^{-z}\| \le 1$ in the closure of the strip $\bS_{1/2}=\{z \in \CC \mid 0 < \Re (z) < 1/2\}$. 
\end{lemma}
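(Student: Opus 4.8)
The plan is to run the Hadamard three‑lines theorem for the operator‑valued map $R(z):=\Delta_{\eta,\psi;\cA}^{z}\,V_\psi\,\Delta_{\eta,\psi;\cB}^{-z}$ on $\overline{\bS_{1/2}}$, reducing the bound $\|R(z)\|\le 1$ to the two edges $\Re z=0$ and $\Re z=\tfrac12$ together with an a priori (not necessarily sharp) bound in the interior needed to license Phragmén--Lindelöf. By the reductions made just before the statement we may assume $|\eta_\cA\rangle,|\eta_\cB\rangle$ cyclic and separating, so that $\Delta_{\eta,\psi;\cA},\Delta_{\eta,\psi;\cB}$ are positive self‑adjoint, the closed relative Tomita operators factor as $S_{\eta,\psi;\cA}=J_\cA\Delta_{\eta,\psi;\cA}^{1/2}$, $S_{\eta,\psi;\cB}=J_\cB\Delta_{\eta,\psi;\cB}^{1/2}$, and $V_\psi$ is a contraction by Kadison's inequality \eqref{eq:eqschwarz}; no use is made of a majorization hypothesis like \eqref{eq:major}.

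Since $\Delta_{\eta,\psi;\cB}^{-z}$ is unbounded for $\Re z>0$, I would test $R(z)$ against spectral‑subspace vectors. For $\delta>0$ set $D_\cA:=E_{[\delta,1/\delta]}(\Delta_{\eta,\psi;\cA})\sH$ and $D_\cB:=E_{[\delta,1/\delta]}(\Delta_{\eta,\psi;\cB})\sK$; as $\delta\downarrow0$ these span the orthogonal complements of $\ker\Delta_{\eta,\psi;\cA}$ and $\ker\Delta_{\eta,\psi;\cB}$, which is the only part of $R(z)$ that is relevant. For $|\phi\rangle\in D_\cA$, $|\chi\rangle\in D_\cB$ put
\[
f(z):=\big\langle\,\Delta_{\eta,\psi;\cA}^{\bar z}\phi\,\big|\,V_\psi\,\Delta_{\eta,\psi;\cB}^{-z}\chi\,\big\rangle ,
\]
which represents the matrix element $\langle\phi|R(z)\chi\rangle$ wherever the latter makes sense. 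Using the spectral resolutions of $\Delta_{\eta,\psi;\cA}$ and $\Delta_{\eta,\psi;\cB}$ restricted to $D_\cA,D_\cB$ (on which all powers are bounded and depend holomorphically on the exponent), $f$ is holomorphic on $\bS_{1/2}$, continuous on $\overline{\bS_{1/2}}$, and crudely bounded there by $\delta^{-1}\|\phi\|\,\|\chi\|$; this is all that is needed to invoke the three‑lines theorem once the two edges are controlled.

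On $\Re z=0$ the operators $\Delta_{\eta,\psi;\cA}^{it},\Delta_{\eta,\psi;\cB}^{-it}$ are unitary, so $|f(it)|\le\|\phi\|\,\|\chi\|$. The edge $\Re z=\tfrac12$ is the real content: writing $f(\tfrac12+it)=\big\langle\phi\,\big|\,\Delta_{\eta,\psi;\cA}^{it}\big(\Delta_{\eta,\psi;\cA}^{1/2}V_\psi\Delta_{\eta,\psi;\cB}^{-1/2}\big)\Delta_{\eta,\psi;\cB}^{-it}\chi\big\rangle$ (legitimate since $D_\cB$ is stable under $\Delta_{\eta,\psi;\cB}^{-it}$ and $\Delta_{\eta,\psi;\cB}^{-1/2}$ and $\phi$ is analytic for $\Delta_{\eta,\psi;\cA}$), it suffices to show $\|\Delta_{\eta,\psi;\cA}^{1/2}V_\psi\Delta_{\eta,\psi;\cB}^{-1/2}\chi\|\le\|\chi\|$ for $\chi\in D_\cB$. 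For a core vector $\zeta=b|\xi_\psi^\cB\rangle+|\Omega\rangle$ with $\pi^{\cB'}(\psi)|\Omega\rangle=0$, the definition \eqref{eq:Vdef1} gives $V_\psi\zeta=T(b)|\xi_\psi^\cA\rangle$, and precisely as in the proof of thm.~\ref{thm:2} above one obtains a contraction $W$ intertwining the Tomita operators, $W\,S_{\eta,\psi;\cB}\subseteq S_{\eta,\psi;\cA}V_\psi$, the contractivity of $W$ being Kadison's inequality for $T$ (sharpened by 2‑positivity in the form of \cite{Petz4}, thm.~4, for the operator denoted $V_\eta$ there). Via $S_{\eta,\psi;\,\cdot}=J_{\,\cdot}\Delta_{\eta,\psi;\,\cdot}^{1/2}$ this reads $\Delta_{\eta,\psi;\cA}^{1/2}V_\psi\,\zeta=J_\cA W J_\cB\Delta_{\eta,\psi;\cB}^{1/2}\zeta$; a closure argument (using that $V_\psi$ is bounded, $\Delta_{\eta,\psi;\cA}^{1/2}$ is closed, and the $\zeta$'s form a core of $S_{\eta,\psi;\cB}$) upgrades this to $\Delta_{\eta,\psi;\cA}^{1/2}V_\psi\Delta_{\eta,\psi;\cB}^{-1/2}\chi=J_\cA W J_\cB\chi$ on $D_\cB$, whence $|f(\tfrac12+it)|\le\|\phi\|\,\|\chi\|$. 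The bookkeeping of the support projections $\pi^{\cA'}(\psi),\pi^{\cB'}(\psi)$ when $\psi$ is not faithful is handled exactly as in paper I.

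By the three‑lines theorem $|f(z)|\le\|\phi\|\,\|\chi\|$ for all $z\in\overline{\bS_{1/2}}$. Letting $|\phi\rangle,|\chi\rangle$ run over the dense sets $\bigcup_{\delta>0}D_\cA$ and $\bigcup_{\delta>0}D_\cB$, the sesquilinear form $(\phi,\chi)\mapsto f(z)$ is bounded by $\|\phi\|\,\|\chi\|$ for each $z$, so the densely defined operator $R(z)=\Delta_{\eta,\psi;\cA}^{z}V_\psi\Delta_{\eta,\psi;\cB}^{-z}$ has a contractive extension, i.e.\ $\|R(z)\|\le1$, which is the assertion. I expect the only real difficulty to be the unbounded‑operator bookkeeping in the open strip — establishing analyticity and the crude a priori bound for $f$, and on the edge $\Re z=\tfrac12$ pinning down $\Delta_{\eta,\psi;\cA}^{1/2}V_\psi\Delta_{\eta,\psi;\cB}^{-1/2}$ on an appropriate core together with the support‑projection subtleties for non‑faithful $\psi$ — whereas each of the two boundary estimates is itself immediate once the objects are in place.
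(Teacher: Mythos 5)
Your argument is correct in outline, but it takes a genuinely different route from the paper. The paper does not use complex interpolation at all: it starts from the single quadratic-form inequality $V_\psi^*\,\Delta_{\eta,\psi;\cA}\,V_\psi\le\Delta_{\eta,\psi;\cB}$ (which is exactly Kadison's inequality \eqref{eq:eqschwarz} evaluated in the state $\omega_\eta$, via $\langle a\xi_\psi|\Delta_{\eta,\psi}a\xi_\psi\rangle=\omega_\eta(aa^*)$), and then interpolates in the exponent purely operator-theoretically: L\"owner's theorem gives $(V_\psi^*\Delta_{\eta,\psi;\cA}V_\psi)^\alpha\le\Delta_{\eta,\psi;\cB}^\alpha$, and the Hansen--Pedersen/Jensen operator inequality for the contraction $V_\psi$ gives $V_\psi^*\Delta_{\eta,\psi;\cA}^\alpha V_\psi\le(V_\psi^*\Delta_{\eta,\psi;\cA}V_\psi)^\alpha$, whence $\|\Delta_{\eta,\psi;\cA}^{\alpha/2}V_\psi\Delta_{\eta,\psi;\cB}^{-\alpha/2}\|\le1$ for all $\alpha\in[0,1]$ in one line, the imaginary part of $z$ contributing only isometries. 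What the paper's route buys is the complete absence of domain bookkeeping in the open strip (everything is a form inequality between positive operators) and, importantly, that it works verbatim for Schwarz maps. What your route buys is independence from L\"owner and the operator Jensen inequality, at the price of the spectral cutoffs, the core/adjoint argument needed to pass from the bounded form to the operator statement $\|R(z)\|\le1$, and the edge analysis at $\Re z=\tfrac12$.

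There is one point you should repair. As written, your bound on the edge $\Re z=\tfrac12$ goes through the intertwiner $W=V_\eta$ of the Tomita operators, whose contractivity (\cite{Petz4}, thm.~4) genuinely uses the $2$-positivity of $T$ because of the support projections $\pi^{\cA}(\psi),\pi^{\cB}(\psi)$ entering its definition. Lemma~\ref{lem:1} is stated (and is needed in the proof of thm.~\ref{thm:1}) for mere Schwarz maps, so your proof as it stands establishes the lemma only under a stronger hypothesis. The fix is to drop the detour through $V_\eta$ and verify the edge bound directly: for $\zeta=b|\xi_\psi^\cB\rangle+|\Omega\rangle$ with $\pi^{\cB'}(\psi)|\Omega\rangle=0$,
\begin{equation}
\|\Delta_{\eta,\psi;\cA}^{1/2}V_\psi\zeta\|^2=\|S_{\eta,\psi;\cA}T(b)\xi_\psi^\cA\|^2=\omega_\eta\bigl(T(b)T(b)^*\bigr)\le\omega_\eta\bigl(T(bb^*)\bigr)=\|S_{\eta,\psi;\cB}\zeta\|^2=\|\Delta_{\eta,\psi;\cB}^{1/2}\zeta\|^2,
\end{equation}
which is Kadison's inequality alone and is precisely the $\alpha=1$ case of the paper's chain. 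With that substitution your three-lines argument closes under the same hypotheses as the paper's proof.
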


\begin{proof}
It follows from the Schwarz inequality \eqref{eq:eqschwarz} that $V_\psi^* \Delta_{\eta,\psi;\cA} V_\psi \le \Delta_{\eta,\psi;\cB}$, see \cite{Petz4}, proof of thm. 4. Let $\alpha \in [0,1]$. By the operator monotonicity 
of the function $\RR_+ \owns x \mapsto x^\alpha \in \RR_+$, it follows  $(V_\psi^* \Delta_{\eta,\psi;\cA} V_\psi)^\alpha \le \Delta_{\eta,\psi;\cB}^\alpha$, by L\"owner's theorem \cite{Hansen}. Furthermore, since 
$V_\psi$ is a contraction and since $\RR_+ \owns x \mapsto x^\alpha \in \RR_+$ is operator monotone, it follows from Jensen's operator inequality 
(see e.g. \cite{Petz1993}, lem. 1.2, or \cite{Petz4}, thm. C) 
$V_\psi^* \Delta_{\eta,\psi;\cA}^\alpha V_\psi \le (V_\psi^* \Delta_{\eta,\psi;\cA} V_\psi)^\alpha$, so $V_\psi^* \Delta_{\eta,\psi;\cA}^\alpha V_\psi \le \Delta_{\eta,\psi;\cB}^\alpha$. This is the same as saying 
\ben
\Delta_{\eta,\psi;\cB}^{-\alpha/2} V_\psi^* \Delta_{\eta,\psi;\cA}^\alpha V_\psi \Delta_{\eta,\psi;\cB}^{-\alpha/2} \le 1, 
\een
which is the same as saying $\|\Delta_{\eta,\psi;\cA}^{\alpha/2} V_\psi \Delta_{\eta,\psi;\cB}^{-\alpha/2} \| \le 1$. The statement follows because $\Delta_{\eta,\psi;\cA}^{it}, \Delta_{\eta,\psi;\cB}^{it}$ are isometric 
when $t$ is real. 
\end{proof}

Similarly as in paper I, we introduce a vector by
\ben\label{eq:G}
|\Gamma_\psi(z)\rangle \equiv   \Delta_{\eta_\cA,\psi_\cA}^{z} V_\psi  \Delta_{\eta_\cB,\psi_\cB}^{-z} |\xi_{\psi}^{\cB}\rangle.  
\een
This vector is well defined by the preceding lemma. In fact, it is analytic inside the strip $\bS_{1/2}$ and continuous 
on the boundaries of the strip, see e.g. paper I  \cite{Faulkner:2020iou}, thm. 4 (1), or the next lemma.  With the goal to extend the domain of analyticity of this vector, we let
$\Pi_\Lambda$ be the projector onto the spectral subspace of the modular operator $\log \Delta_{\psi_\cA}$, associated with the interval $(-\Lambda,\Lambda)$, for some large ``cutoff'' $\Lambda$ which 
we will let got to infinity. Then we define the new vector valued function $\Pi_\Lambda |\Gamma_\psi(z)\rangle$ on the one-sided strip $\bS_{1/2}=\{z: 0<\Re z<1/2\}$. 
The following two lemmas, which also rely on the majorization condition \eqref{eq:major}, give 
an analytic continuation to the two-sided strip $\{z: -1/2<\Re z<1/2\}$. They express the same 
as paper I, lem. 6 (1,2) but we give a somewhat different proof. 

\begin{lemma}\label{lem:5}
$\Pi_\Lambda |\Gamma_{\psi}(z)\rangle$ is well-defined and holomorphic in $\{z: -1/2<\Re z<1/2\}$ and strongly continuous at $\Re z=\pm 1/2$.
\end{lemma}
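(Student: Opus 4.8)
The plan is to establish analyticity on the two-sided strip by constructing the continuation of $\Pi_\Lambda|\Gamma_\psi(z)\rangle$ from the right-hand strip $\bS_{1/2}$ to $\{z:-1/2<\Re z<1/2\}$ using the majorization hypothesis \eqref{eq:major}. First I would note that on $\bS_{1/2}$ the vector $|\Gamma_\psi(z)\rangle$ is already well-defined and holomorphic by Lemma \ref{lem:1} together with a standard three-lines / Hadamard argument (as in paper I, thm. 4 (1)), so the only issue is extending past $\Re z = 0$ toward $\Re z = -1/2$. The key structural input is the rewriting \eqref{eq:Gupper1}, specialized to general $z = 1/2+it \to$ a general point: using $\Delta_{\eta,\psi;\cA}^{z} = \Delta_{\eta,\psi;\cA}^{z-1/2} J_\cA S_{\eta_\cA,\psi_\cA}$ and the intertwining $V_\eta S_{\eta_\cB,\psi_\cB} \subset S_{\eta_\cA,\psi_\cA} V_\psi$, one gets an expression of the schematic form $\Delta_{\eta,\psi;\cA}^{z-1/2} J_\cA V_\eta J_\cB \Delta_{\eta,\psi;\cB}^{1/2-z}|\xi_\psi^\cB\rangle$ valid on the original strip, which already has a natural continuation: the new exponent $z-1/2$ now has real part running over $(-1/2,0)$.

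The role of the cutoff $\Pi_\Lambda$ and the majorization \eqref{eq:major} is to control the remaining unbounded factor. Concretely, $\Pi_\Lambda$ makes $\Delta_{\psi_\cA}^{w}\Pi_\Lambda$ bounded (by $e^{|\Lambda\Re w|}$) for all complex $w$, so one can freely interconvert $\Delta_{\eta_\cA,\psi_\cA}$ and $\Delta_{\psi_\cA}$ up to bounded factors via the Connes cocycle $(D\eta:D\psi)_t = \Delta_{\eta_\cA,\psi_\cA}^{it}\Delta_{\psi_\cA}^{-it}$, whose analytic continuation is controlled precisely because $\omega_\psi \le c\,\omega_\eta$ forces $\Delta_{\eta_\cA,\psi_\cA} \le c\,\Delta_{\psi_\cA}$ (and similarly on $\cB$, since $\omega_\psi\circ T \le c\,\omega_\eta\circ T$), yielding a bound on $\Delta_{\eta_\cA,\psi_\cA}^{1/2}\Delta_{\psi_\cA}^{-1/2}$ and hence, by interpolation, on the cocycle in a strip. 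I would then patch together the two representations — the one holomorphic on $0<\Re z<1/2$ and the $V_\eta$-based one holomorphic on $-1/2<\Re z<0$ — checking they agree on the overlap line $\Re z = 0$ by \eqref{eq:Gupper1}, and invoke Morera/edge-of-the-wedge (or simply that two holomorphic functions agreeing on a line segment, each continuous up to it, glue to a holomorphic function) to conclude holomorphy on the full open two-sided strip. Strong continuity up to $\Re z = \pm 1/2$ follows from the boundedness estimates at the edges together with Stone's theorem applied to the purely imaginary modular flows, exactly as invoked after \eqref{eq:Gupper}.

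The main obstacle I anticipate is the careful bookkeeping needed to show the two representations genuinely define holomorphic $\sH$-valued functions up to and including the seam $\Re z = 0$, rather than merely on the open half-strips: one must verify that the vector $\Delta_{\eta,\psi;\cB}^{-z}|\xi_\psi^\cB\rangle$ lies in the domain of the relevant operators for $z$ in the closed region, which is where the spectral cutoff $\Pi_\Lambda$ on the $\cA$-side must be supplemented by an argument that the $\cB$-side factor $\Delta_{\eta,\psi;\cB}^{-z}|\xi_\psi^\cB\rangle$ is itself analytic and norm-bounded on $-1/2\le\Re z\le 1/2$ — this again uses $\omega_\psi\circ T \le c\,\omega_\eta\circ T$ to bound $\Delta_{\eta,\psi;\cB}^{-1/2}|\xi_\psi^\cB\rangle = \Delta_{\eta,\psi;\cB}^{-1/2}\Delta_{\psi_\cB}^{1/2}|\xi_\psi^\cB\rangle$ via $\|\Delta_{\eta,\psi;\cB}^{-1/2}\Delta_{\psi_\cB}^{1/2}\| \le c^{1/2}$. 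Once both one-sided pieces are shown to be bounded-holomorphic up to the seam, the gluing is routine. I expect the proof to otherwise closely parallel paper I, lem. 6 (1,2), with the cocycle estimates being the substantive content.
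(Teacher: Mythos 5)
Your overall architecture (two holomorphic representations on the two half-strips, majorization to control Connes cocycles, the cutoff $\Pi_\Lambda$ to tame $\Delta_{\psi_\cA}^{z}$, gluing by edge-of-the-wedge) matches the paper's, but the specific device you propose for the left half-strip does not work. The rewriting $|\Gamma_\psi(z)\rangle=\Delta_{\eta,\psi;\cA}^{z-1/2}J_\cA V_\eta J_\cB\Delta_{\eta,\psi;\cB}^{1/2-z}|\xi_\psi^\cB\rangle$ only reparametrizes the region you already control: for $\Re z\in(0,1/2)$ the $\cB$-exponent $1/2-\Re z$ stays in $(0,1/2)$, whereas to reach $\Re z\in(-1/2,0)$ you would need $|\xi_\psi^\cB\rangle\in\dom\,\Delta_{\eta_\cB,\psi_\cB}^{1/2-z}$ with $\Re(1/2-z)\in(1/2,1)$. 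Neither Tomita--Takesaki theory (exponents in $[0,1/2]$) nor the majorization (which, via lem.~\ref{lem:2}, extends the admissible exponents only down to $-1/2$) gives you that; it would require a reverse majorization $\omega_\eta\circ T\le c'\,\omega_\psi\circ T$, which is not assumed. Relatedly, your cocycle estimate has the inequality backwards: $\omega_\psi\le c\,\omega_\eta$ yields $\Delta_{\psi_\cA}\le c\,\Delta_{\eta_\cA,\psi_\cA}$, not $\Delta_{\eta_\cA,\psi_\cA}\le c\,\Delta_{\psi_\cA}$, hence a bound on $\Delta_{\eta,\psi;\cA}^{-1/2}\Delta_{\psi_\cA}^{1/2}$ and its adjoint, i.e.\ on $(D\eta_\cA:D\psi_\cA)_{iz}$ for $0\le\Re z\le 1/2$ --- not on $\Delta_{\eta_\cA,\psi_\cA}^{1/2}\Delta_{\psi_\cA}^{-1/2}$.

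The representation that actually covers $-1/2\le\Re z\le 0$ keeps $V_\psi$ and peels the cocycle off the $\cA$-side: $\Pi_\Lambda|\Gamma_\psi(z)\rangle=\Pi_\Lambda\Delta_{\psi_\cA}^{z}(D\eta_\cA:D\psi_\cA)_{iz}^{-1}V_\psi\Delta_{\eta_\cB,\psi_\cB}^{-z}|\xi_\psi^\cB\rangle$. There the $\cB$-factor is controlled by ordinary Tomita--Takesaki theory (the exponent $-z$ has nonnegative real part), $(D\eta_\cA:D\psi_\cA)_{iz}^{-1}=(D\eta_\cA:D\psi_\cA)_{-i\bar z}^{*}$ is a bounded holomorphic $\cA$-valued family by lem.~\ref{lem:2} (this is where the majorization enters on the $\cA$-side), and $\Pi_\Lambda\Delta_{\psi_\cA}^{z}$ is bounded holomorphic thanks to the cutoff. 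On the right half-strip the cocycle is instead peeled off on the $\cB$-side, $V_\psi\Delta_{\eta_\cB,\psi_\cB}^{-z}|\xi_\psi^\cB\rangle=T[(D\eta_\cB:D\psi_\cB)_{iz}]|\xi_\psi^\cA\rangle$; this also repairs a point you pass over too quickly, namely that even on $\bS_{1/2}$ one must verify $|\xi_\psi^\cB\rangle\in\dom\,\Delta_{\eta_\cB,\psi_\cB}^{-z}$ (a negative power), which already uses the majorization on $\cB$. With these two representations in hand, the edge-of-the-wedge gluing and the boundary continuity go through as you describe.
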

\begin{proof}
$(D\eta_\cB:D\psi_\cB)_{iz} |\xi_{\psi}^{\cB}\rangle = \Delta_{\eta_\cB,\psi_\cB}^{-z} |\xi_{\psi}^{\cB}\rangle$
is holomorphic in $-1/2<\Re z<0$ and strongly continuous at the boundary, 
as guaranteed by Tomita-Takesaki theory. The majorization condition 
in combination with lem. \ref{lem:2} shows that $(D\eta_\cB:D\psi_\cB)_{iz}$  is bounded 
for $1/2 \ge \Re z \ge 0$ so $ |\xi_{\psi}^{\cB}\rangle $ is in the domain of $\Delta_{\eta_\cB,\psi_\cB}^{-z}$ in this range. 
Then, \cite{CecchiniPetz}, lem. 2.1 shows that $\Delta_{\eta_\cB,\psi_\cB}^{-z} |\xi_{\psi}^{\cB}\rangle$ is holomorphic in this range, 
and strongly continuous at the boundary $\Re z=1/2$ and at $\Re z=0$.
Since $V_\psi$ is bounded, $V_\psi (D\eta_\cB:D\psi_\cB)_{iz} |\xi_{\psi}^{\cB}\rangle$
is holomorphic in $-1/2<\Re z<1/2$ by the edge-of-the-wedge theorem, and strongly continuous at $\Re z=1/2, -1/2$. 
Now we consider separately the following cases: 

(1)
$1/2 \ge \Re z \ge 0$: Then by lem. \ref{lem:2}, $(D\eta_\cB:D\psi_\cB)_{iz}  \in \cB$,
 and we may write 
$\Pi_\Lambda |\Gamma_{\psi}(z)\rangle
=\Pi_\Lambda \Delta_{\eta_\cA,\psi_\cA}^{z} T[(D\eta_\cB:D\psi_\cB)_{iz}] |\xi_{\psi}\rangle$
from the definition of $V_\psi$, and therefore $T[(D\eta_\cB:D\psi_\cB)_{iz}] |\xi_{\psi}\rangle$ is 
in the domain of $\Delta_{\eta,\psi}^{z} $ by standard Tomita-Takesaki theory. Thus, \cite{CecchiniPetz}, lem. 2.1 shows that
$\Pi_\Lambda |\Gamma_{\psi}(z)\rangle$ is holomorphic in the the $\bS_{1/2}$ and strongly continuous on its boundary.

(2) $-1/2\le \Re z\le 0$: Then by lem. \ref{lem:2}, $(D\eta_\cA:D\psi_\cA)_{iz}^{-1} = (D\eta_\cA:D\psi_\cA)_{-i\bar z}^* \in \cA$ is holomorphic 
for $z \in -\bS_{1/2}$ 
and strongly continuous on the boundaries. We may write 
\ben
\Pi_\Lambda |\Gamma_{\psi}(z)\rangle= \Pi_\Lambda \Delta_{\psi_\cA}^{z} (D\eta_\cA:D\psi_\cA)_{iz}^{-1} V_\psi 
\Delta_{\eta_\cB, \psi_\cB}^{-z} |\xi_{\psi}^{\cB}\rangle,
\een 
which is now seen to be holomorphic because 
$\Pi_\Lambda \Delta_{\psi_\cA}^{z}$ is a bounded holomorphic family of operators for $z \in -\bS_{1/2}$, strongly continuous on the boundaries, 
with norm not exceeding $e^{2\Lambda |\Re z|}$. 

Cases (1) and (2) together give the statement by an application of the operator-valued edge-of-the-wedge theorem. 
\end{proof}

\begin{lemma}\label{lem:2}
Suppose $|\phi\rangle, |\psi\rangle$ are states on a v. Neumann algebra $\cM$ in standard form. 
If there is $c>0$ such that $\omega_\zeta \le c \omega_\phi$, then $z \mapsto (D\phi:D\psi)_{iz}= \Delta_{\phi,\psi}^{-z} \Delta_{\psi}^{z}$
 is holomorphic for $z \in \bS_{1/2}$, strongly continuous at the boundary, and 
one has $\| (D\phi:D\psi)_{iz} \| \le c^{2\Re(z)}$, $(D\phi:D\psi)_{iz} \in \cA$.
\end{lemma}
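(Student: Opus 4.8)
The plan is to read the statement off from the behaviour of the Connes cocycle under a majorization hypothesis — a part of Tomita--Takesaki theory — together with the operator-valued three-lines theorem. First I would reduce to faithful states. Since $\omega_\psi$ and $\omega_\phi$ are states, the hypothesis $\omega_\psi\le c\,\omega_\phi$ forces $c\ge1$, and it forces $\pi^\cM(\psi)\le\pi^\cM(\phi)$, so compressing by $\pi^\cM(\phi)$ — exactly as in the reductions made for the Petz map in sec. \ref{subsec:petz} — we may assume $\phi$ faithful; then replacing $\psi$ by the faithful state $\omega_{\psi_\epsilon}:=(1-\epsilon)\,\omega_\psi+\epsilon\,\omega_\phi$, which still obeys $\omega_{\psi_\epsilon}\le c\,\omega_\phi$ precisely because $c\ge1$, and sending $\epsilon\to0$ at the end (relative modular operators and cocycles converging appropriately), we may assume in addition that $\psi$ is faithful. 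Then $\xi_\phi,\xi_\psi$ are cyclic and separating vectors in the common natural cone, $J$ is the canonical conjugation, and the majorization says precisely that the densely defined map $x\xi_\phi\mapsto x\xi_\psi$ ($x\in\cM$) extends to an operator $b'\in\cM'$ with $\|b'\|\le c^{1/2}$ and $b'\xi_\phi=\xi_\psi$.

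Next I would invoke the standard analyticity of the Connes cocycle under majorization — the tool being, just as for $\Delta_{\eta,\psi}^{-z}|\xi_\psi\rangle$ in the proof of lem. \ref{lem:5}, \cite{CecchiniPetz}, lem. 2.1 — to conclude that the operator $(D\phi:D\psi)_{iz}=\Delta_{\phi,\psi}^{-z}\Delta_{\psi}^{z}$ is holomorphic in $z\in\bS_{1/2}$ and strongly continuous on $\overline{\bS_{1/2}}$; it lies in $\cM$ throughout, since $[(D\phi:D\psi)_{it},m']=0$ holds for $t\in\RR$ by Connes' theorem and propagates into the strip by the identity theorem. A direct computation on the core $\{\Delta_\phi^{1/2}x\xi_\phi:x\in\cM\}$, using the Tomita operators $S_{\phi,\phi}$, $S_{\phi,\psi}$ and $\xi_\psi=b'\xi_\phi$, identifies the value at $z=1/2$ as $(D\phi:D\psi)_{i/2}=J(b')^*J\in\cM$, of norm $\|b'\|\le c^{1/2}$, and by the cocycle relation the whole edge $\Re z=1/2$ is a modular translate of this, hence still of norm $\le c^{1/2}\le c$. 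Since on $\Re z=0$ the operator is unitary of norm $1$, the operator-valued Phragm\'en--Lindel\"of (three-lines) theorem yields $\|(D\phi:D\psi)_{iz}\|\le c^{2\Re z}$ on $\overline{\bS_{1/2}}$, which is the claim. (The sharper edge bound $c^{1/2}$ actually gives $c^{\Re z}$, also visible from operator monotonicity of $x\mapsto x^{2\Re z}$; but $c^{2\Re z}$ is all that lem. \ref{lem:5} uses.)

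I expect the substantive point to be the interior analyticity and strong continuity, not the norm estimate: pinning down the two boundary values is routine, but to know the continuation exists as a \emph{bounded} operator throughout $\bS_{1/2}$ one must use the majorization to place the relevant vectors in the domains of the unbounded powers of the relative modular operators and then upgrade vector-valued analyticity (in the manner of \cite{CecchiniPetz}, lem. 2.1) to operator-valued analyticity via the uniform norm bound. One should also check that the conjugations and natural cones entering $S_{\phi,\phi}$ and $S_{\phi,\psi}$ coincide, which is why one fixes the common standard-form picture at the outset.
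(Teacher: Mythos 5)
Your overall architecture matches the paper's: use the majorization to control the cocycle in norm on the closed strip, use \cite{CecchiniPetz}, lem.~2.1 to upgrade this to vector-valued (hence, with the uniform bound, operator-valued) holomorphy, and propagate the relation $(D\phi:D\psi)_{it}\in\cM$ from the imaginary axis into the strip by commuting with $\cM'$. The one genuinely different ingredient is how you obtain the norm bound: you compute the boundary value $(D\phi:D\psi)_{i/2}=J(b')^*J$ with $\|b'\|\le c^{1/2}$ and then interpolate by Phragm\'en--Lindel\"of. The paper instead observes $\Delta_\psi\le c\,\Delta_{\phi,\psi}$ (a one-line computation with the support projections kept in place) and applies L\"owner's theorem to get $\Delta_\psi^{\alpha}\le c^{\alpha}\Delta_{\phi,\psi}^{\alpha}$ for every $\alpha\in[0,1]$, which yields $\|\Delta_{\phi,\psi}^{-z}\Delta_{\psi}^{z}\|\le c^{\Re (z)}\le c^{2\Re (z)}$ on \emph{each} vertical line separately. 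This buys something concrete: a three-lines argument needs an a priori bound in the interior of the strip before it can be invoked, and boundedness in the interior is precisely what is at issue; operator monotonicity gives the pointwise bound with no interpolation and no circularity. Your route can be repaired (apply the three-lines theorem to the bounded scalar functions $z\mapsto\langle\zeta|\Delta_{\phi,\psi}^{-z}\Delta_{\psi}^{z}a\xi_\psi\rangle$ for $a$ in a modular-analytic dense set), but as written the logical order is backwards.

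Two further points. (i) The reduction to faithful $\psi$ via $\omega_{\psi_\epsilon}=(1-\epsilon)\omega_\psi+\epsilon\omega_\phi$ is unnecessary and is the weakest link of your argument: both the existence of $b'\in\cM'$ with $b'\xi_\phi=\xi_\psi$, $\|b'\|\le c^{1/2}$, and the inequality $\Delta_\psi\le c\,\Delta_{\phi,\psi}$ require only that $\xi_\phi$ be cyclic, so the whole proof goes through for non-faithful $\psi$ directly, as in the paper. As written, your limit $\epsilon\to0$ is asserted but not justified at the step that matters most, namely the \emph{strong continuity at the boundary} $\Re z=1/2$: pointwise strong limits of functions continuous up to the boundary need not be continuous there without some equicontinuity, which you do not supply. (ii) ``Propagates into the strip by the identity theorem'' is imprecise: the commutator $[(D\phi:D\psi)_{iz},m']$ vanishes only on the line $\Re z=0$, which lies on the boundary of the domain of holomorphy, so the identity theorem does not apply until after a Schwarz reflection across that line; this is exactly the operator-valued edge-of-the-wedge step the paper invokes. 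Neither point is fatal, but both should be fixed.
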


\begin{proof}
This fact is well known and we include the proof only for completeness. 
The following calculation shows that $\Delta_{\psi} \le c \Delta_{\phi,\psi}$:
\ben
\langle a \xi_\psi | \Delta_{\phi,\psi} a \xi_\psi \rangle = \langle \xi_\phi | a \pi^\cM(\psi) a^* \xi_\phi \rangle \ge c^{-1} \langle \xi_\psi | a \pi^{\cM'}(\psi) a^* \xi_\psi \rangle 
=c^{-1} \langle a \xi_\psi | \Delta_\psi a \xi_\psi \rangle
\een
where $a \in \cA$, which uses $\omega_\psi \le c\omega_\phi$. Then, by operator monotonicity  \cite{Hansen}, we get $\Delta_\psi^\alpha \le c^\alpha \Delta_{\phi,\psi}^\alpha$
for $0 \le \alpha \le 1$, and thereby $\| \Delta_{\phi,\psi}^{-z} \Delta_{\psi}^{z} \| \le c^{2\Re(z)}$ when $z$ is in the closure of $\bS_{1/2}$. If $a \in \cM$, 
standard Tomita-Takesaki theory tells us that $\Delta_{\psi}^{z} a |\psi\rangle$ is holomorphic inside $\bS_{1/2}$ and strongly continuous
at the boundary. Therefore, these vectors are in the domain of  $\Delta_{\phi,\psi}^{-z}$ and then \cite{CecchiniPetz}, lem. 2.1 tells 
us that $\Delta_{\phi,\psi}^{-z} \Delta_{\psi}^{z} a |\psi\rangle$ is holomorphic in $\bS_{1/2}$ and strongly continuous at the boundary, so 
$(D\phi:D\psi)_{iz}$ is a holomorphic operator valued function on $\bS_{1/2}$ which is strongly continuous at the boundary. 
For $z \in i\RR$, 
we know in general that $(D\phi:D\psi)_{iz} \in \cM$, so for any $a' \in \cM'$, $z \mapsto [a', (D\phi:D\psi)_{iz}]$ is a holomorphic 
function in $\bS_{1/2}$ taking values in the bounded operators with a bounded, vanishing limit as $\Re z \to 0^+$. Then, by the by the operator-valued 
edge of the wedge theorem, the function vanishes identically and we get $(D\phi:D\psi)_{iz} \in \cM''=\cM$ when $z$ is in the closure of $\bS_{1/2}$. 
\end{proof}

Step 2)
Let $1/p_\theta=(1-2\theta)/2+2\theta/q$ for $\theta \in [0,1/2]$, which interpolates between $p_0=2,p_{1/2}=q \in [1,2]$. Very similarly as in paper I, the relationship between the function \eqref{eq:G} and the 
quantum data-processing inequality is that  
\ben
\label{eq:limit}
\lim_{\theta \to 0^+} \left( -\frac{1}{\theta} \log \| \Pi_\Lambda \Gamma_\psi(\theta) \|_{p_\theta,\psi}^2 \right) = S(\psi_\cA | \eta_\cA) - S(\psi_\cB | \eta_\cB). 
\een
On the left side, we mean the Araki-Masuda $L_{p_\theta}(\cA',\psi)$-norm, i.e. relative to the commutant algebra $\cA'$ and the state $|\psi\rangle$.
For finite dimensional quantum systems, this is a straightforward consequence of the definitions. In the general v. Neumann context, we could proceed 
in basically the same way as in paper I, but here we can take certain shortcuts since we are not interested in reproving paper I, thm. 3. 

Setting $|\zeta_\theta\rangle = \Pi_\Lambda |\Gamma_\psi(\theta)\rangle/\| \Pi_\Lambda \Gamma_\psi(\theta)\|$, the homogeneity of the $L_p$-norm gives
\ben
\label{eq:AB}
\begin{split}
\lim_{\theta \to 0^+} \left( -\frac{1}{\theta} \log \| \Pi_\Lambda \Gamma_\psi(\theta) \|_{p_\theta,\psi}^2 \right) &= \lim_{\theta \to 0^+} \left( -\frac{1}{\theta} 
\log \| \Pi_\Lambda \Gamma_\psi(\theta) \|^2 \right) + \lim_{\theta \to 0^+} \left( -\frac{1}{\theta} \log  \| \zeta_\theta \|_{p_\theta,\psi}^2
 \right) \\
&=: A+B.
\end{split}
\een
Note that $\lim_{\theta \to 0} \| \zeta_\theta - \psi\|^2/\theta = 0$ by construction as $\Pi_\Lambda|\Gamma_\psi(z)\rangle$ is 
strongly holomorphic at $z=0$ and $\Pi_\Lambda|\Gamma_\psi(0)\rangle=|\psi\rangle$ by definition. 
The term $B$ is treated with the next two lemmas playing the same role as paper I, lem. 4 and thm. 3 (2). 

\begin{lemma}\label{lem:squeeze}
For two unit norm vectors $|\zeta\rangle,|\psi\rangle \in {\mathscr Hs}$ and a v. Neumann algebra $\cM$ acting on $\mathscr H$, 
$p \in [1,2]$, we have 
\ben
| \langle \zeta | \psi \rangle | \le \| \zeta \|_{p,\psi} \le \| \Delta_{\psi,\zeta}^{1/p-1/2} \zeta \|.
\een
\end{lemma}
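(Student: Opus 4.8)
The plan is to read both inequalities off the variational definition of the Araki--Masuda $L_p$ norm recalled in app.~B; the argument is short and is essentially the one used in paper~I~\cite{Faulkner:2020iou}, lem.~4.

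\emph{Upper bound.} For $p\in[1,2]$ the norm $\|\zeta\|_{p,\psi}$ is, by definition, the infimum of $\|\Delta_{\psi,\Psi}^{1/p-1/2}\zeta\|$ over normalized vectors $\Psi$, a given $\Psi$ contributing $+\infty$ if $\zeta\notin\dom\Delta_{\psi,\Psi}^{1/p-1/2}$. First I would observe that $\Psi=\zeta$ is an admissible choice with finite contribution: since $S_{\psi,\zeta}\zeta=\psi$ one has $\zeta\in\dom S_{\psi,\zeta}=\dom\Delta_{\psi,\zeta}^{1/2}$, and because $1/p-1/2\in[0,\tfrac12]$ this yields $\zeta\in\dom\Delta_{\psi,\zeta}^{1/p-1/2}$. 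Substituting $\Psi=\zeta$ into the infimum then gives $\|\zeta\|_{p,\psi}\le\|\Delta_{\psi,\zeta}^{1/p-1/2}\zeta\|$ at once.

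\emph{Lower bound.} I would fix any test vector $\Psi$ with $\zeta\in\dom\Delta_{\psi,\Psi}^{1/p-1/2}$ and, using that $\Delta_{\psi,\Psi}$ is positive, self-adjoint and (after the reduction to faithful states described below) injective, write
\[
\langle\zeta|\psi\rangle=\big\langle\,\Delta_{\psi,\Psi}^{1/p-1/2}\zeta\ \big|\ \Delta_{\psi,\Psi}^{-(1/p-1/2)}\psi\,\big\rangle ,
\]
which is legitimate once $\psi\in\dom\Delta_{\psi,\Psi}^{-(1/p-1/2)}$, because then $\Delta_{\psi,\Psi}^{-(1/p-1/2)}\psi\in\mathrm{ran}\,\Delta_{\psi,\Psi}^{-(1/p-1/2)}=\dom\Delta_{\psi,\Psi}^{1/p-1/2}$. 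Cauchy--Schwarz bounds $|\langle\zeta|\psi\rangle|$ by $\|\Delta_{\psi,\Psi}^{1/p-1/2}\zeta\|\,\|\Delta_{\psi,\Psi}^{-(1/p-1/2)}\psi\|$, so it remains to see $\|\Delta_{\psi,\Psi}^{-(1/p-1/2)}\psi\|\le1$. The key (but standard) input here is the relation $\Delta_{\psi,\Psi}^{-1}=\Delta'_{\Psi,\psi}$ between the relative modular operators of $\cM$ and $\cM'$: together with $S'_{\Psi,\psi}\psi=\Psi$ it gives $\|\Delta_{\psi,\Psi}^{-1/2}\psi\|=\|\Psi\|=1=\|\Delta_{\psi,\Psi}^{0}\psi\|$, and since $\beta\mapsto\|\Delta_{\psi,\Psi}^{-\beta}\psi\|$ is log-convex it is $\le1$ on $[0,\tfrac12]$, in particular at $\beta=1/p-1/2$. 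Taking the infimum over $\Psi$ yields $|\langle\zeta|\psi\rangle|\le\|\zeta\|_{p,\psi}$. (Alternatively one may quote \eqref{eq:Fchar}, by which $\|\zeta\|_{1,\psi}$ is a fidelity dominating $|\langle\zeta|\psi\rangle|$ via the test element $1$, and then use that $q\mapsto\|\zeta\|_{q,\psi}$ is non-decreasing, see \cite{Berta2}.)

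\emph{Where the work is.} There is no real obstacle; the one point demanding care is that $|\zeta\rangle$ and $|\psi\rangle$ are not assumed cyclic and separating for $\cM$. As in sec.~\ref{sec:2} and paper~I~\cite{Faulkner:2020iou} one first passes to the support projections $\pi^\cM(\cdot)$, $\pi^{\cM'}(\cdot)$, which renders the relevant relative modular operators injective, puts $\psi$ into the needed domains, and changes none of the three quantities in the statement; and one matches the slot ordering in app.~B's definition of $\|\cdot\|_{p,\psi}$ (and the choice of $\cM$ versus its commutant as reference algebra) to the operator $\Delta_{\psi,\zeta}$ displayed in the statement. Once this bookkeeping is in place, the substantive content is just the choice $\Psi=\zeta$ in the upper bound and the norm identity $\|\Delta_{\psi,\Psi}^{-1/2}\psi\|=1$ in the lower bound.
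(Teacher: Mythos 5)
Your upper bound is exactly the paper's: choose $\phi=\zeta$ in the ``inf'' characterization of the $L_p$-norm, after checking $\zeta\in\dom\Delta_{\psi,\zeta}^{1/p-1/2}$ and sorting out the slot/commutant bookkeeping. For the lower bound your main argument is a from-scratch proof of the special case of the Araki--Masuda H\"older inequality that is needed (pairing against $\psi$ itself, with $\|\Delta_{\psi,\Psi}^{-\beta}\psi\|\le 1$ on $\beta\in[0,1/2]$ by log-convexity and $S'_{\Psi,\psi}\psi=\Psi$); for faithful $\psi$ the paper simply cites \cite{AM} for exactly this, so you are in substance reproving their input, which is fine.

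The one genuine soft spot is the non-faithful case, and it is precisely the point the paper treats differently. The authors note that the H\"older inequality of \cite{AM} has not been established for non-separating $\psi$ and therefore switch to a different argument: the variational lower bound on $\|\zeta\|_{p,\psi,\cM}^2$ from \cite{Hollands2}, prop.~1, evaluated with $P_\psi=|\psi\rangle\langle\psi|$, which reduces to an explicit beta-type integral equal to $|\langle\zeta|\psi\rangle|^2$. Your proposed fix --- ``pass to support projections, nothing changes'' --- is not obviously innocuous here: in the application $\zeta=\Pi_\Lambda|\Gamma_\psi(\theta)\rangle$ need not be supported under $\pi^{\cM}(\psi)$, and for non-faithful $\psi$ the norm $\|\cdot\|_{p,\psi}$ is itself \emph{defined} by the variational route of \cite{Berta2,Hollands2} rather than by app.~B's infimum, so one cannot freely shuttle between the reduced and unreduced pictures without an argument. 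Your parenthetical alternative (fidelity at $p=1$ via the test element $1$, plus monotonicity of $q\mapsto\|\zeta\|_{q,\psi}$) is closer to airtight and is the natural extension of the paper's own $p=1$ argument, but the monotonicity in $q$ also needs to be sourced in the non-faithful setting. If you either adopt the paper's variational computation or substantiate the monotonicity claim, the proof closes.
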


\begin{proof}
The statement holds trivially for $p=2$, because the $L_2$-norm is the Hilbert-space norm on vectors in $\mathscr H$. For $p=1$, the right 
inequality holds by the ``inf'' variational definition of the $L_1$-norm, see \eqref{eq:pnorm}, and the left inequality by the 
``sup'' variational definition of the fidelity \eqref{eq:Fchar}, which is equal to the $L_1$-norm, as proven 
in the generality needed here in paper I, lem. 3 (1). 
So we can assume $p \in (1,2)$ in the rest of the proof.

(1) The upper  bound is trivial by the ``inf'' definition of the Araki-Masuda $L_p$-norm (valid when $p \in [1,2]$):
\ben
\| \zeta\|_{p,\psi}^2  = \inf_{\|\phi\|=1} \| \Delta_{\phi,\psi}^{1/2-1/p} \zeta \|^2 \le \| \Delta_{\zeta,\psi}^{1/p-1/2} \zeta \|^2 
=  \| \Delta^{1/p-1/2}_{\psi,\zeta} \zeta \|^2.
\een
(2) For a faithful $|\psi\rangle$, the lower bound follows immediately from the H\" older inequality proven by \cite{AM}:
$| \langle \zeta | \psi \rangle | \le \| \zeta \|_{p,\psi} \| \psi \|_{p',\psi}$ with $p'$ the dual H\" older index of $p$.
 The proof is then completed by $\| \psi \|_{p',\psi} =1$, also shown by \cite{AM}. For non-faithful $|\psi\rangle$
 this proof would also go through if one could generalize the H\" older inequality to this situation which appears likely. 
 Here we give a simpler argument 
 based on the variational characterization \cite{Hollands2}, prop. 1 of the $L_p$ norms. 
 First, from the supremum characterization of the fidelity \eqref{eq:Fchar}, we 
 have $F(y\omega_\zeta y^*,\omega_\psi)^2\ge |\langle \psi | y \zeta \rangle|^2 = \| P_\psi y \zeta\|^2$, with $P_\psi = |\psi \rangle \langle \psi |$.
 Then, using the notations of \cite{Hollands2}, prop. 1 
 \ben
 \begin{split}
\| \zeta \|_{p,\psi,\cM}^2
\ge &
-\frac{\sin(2\pi/p)}{\pi}
\inf_{x: \RR_+ \to \cM'} \int_0^\infty [\|x(t) \zeta \|^2  + t^{-1} \| P_\psi y(t) \zeta\|^2] t^{-2(1-1/p)} \dd t \\
= &
-\frac{\sin(2\pi/p)}{\pi} \int_0^\infty \langle \zeta | P_\psi(t+P_\psi)^{-1}  \zeta\rangle t^{-2(1-1/p)} \dd t \\
= &
-\| P_\psi \zeta\|^2 \frac{\sin(2\pi/p)}{\pi} \int_0^\infty (t+1)^{-1}  t^{-2(1-1/p)} \dd t 
=  | \langle \zeta | \psi \rangle |^2.
\end{split}
\een
The lemma is proven.
\end{proof}

\begin{lemma}
\label{lem:firstlaw}
For a v. Neumann algebra $\cM$ acting on $\sH$, if $|\psi\rangle \in \sH$ is a unit vector 
and $|\zeta_\theta \rangle$ is a 1-parameter family of unit vectors such that 
\ben
\lim_{\theta \to 0} \frac{\| \zeta_\theta - \psi\|^2}{\theta} = 0,
\een 
then we have 
\ben
\lim_{\theta \to 0} \frac{1}{\theta} \log \|\zeta_\theta \|_{p_\theta,\psi} = 0
\een
for any family $p_\theta$ such that $1 \le p_\theta \le 2$.
\end{lemma}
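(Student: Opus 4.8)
The plan is to sandwich $\log \|\zeta_\theta\|_{p_\theta,\psi}$ between two quantities that both go to zero after division by $\theta$, using the two-sided bound from Lemma \ref{lem:squeeze}. For the lower bound, Lemma \ref{lem:squeeze} gives $\|\zeta_\theta\|_{p_\theta,\psi} \ge |\langle \zeta_\theta | \psi \rangle|$, and since $|\zeta_\theta\rangle,|\psi\rangle$ are unit vectors with $\|\zeta_\theta-\psi\|^2 = o(\theta)$, we have $|\langle \zeta_\theta | \psi\rangle| = 1 - \tfrac12\|\zeta_\theta - \psi\|^2 + o(\|\zeta_\theta-\psi\|^2) = 1 - o(\theta)$, hence $\log|\langle \zeta_\theta|\psi\rangle| = o(\theta)$ and $\tfrac1\theta \log \|\zeta_\theta\|_{p_\theta,\psi} \ge o(1)$. (One should absorb the harmless subtlety that $\langle\zeta_\theta|\psi\rangle$ need not be real by replacing $|\zeta_\theta\rangle$ with a unimodular phase multiple; this does not change the $L_p$-norm since $\cM'$ contains the scalars, and does not change $\|\zeta_\theta - \psi\|$ up to $o(\theta)$ because the phase is $1+o(\theta)$.)

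For the upper bound, Lemma \ref{lem:squeeze} gives $\|\zeta_\theta\|_{p_\theta,\psi} \le \|\Delta_{\psi,\zeta_\theta}^{1/p_\theta - 1/2}\zeta_\theta\|$. Here the exponent $s_\theta := 1/p_\theta - 1/2$ lies in $[0,1/2]$ and, as $p_\theta \to $ its limit is at worst $1/2$; the key point is that $\zeta_\theta \to \psi$, so $\Delta_{\psi,\zeta_\theta}^{s_\theta}\zeta_\theta$ should converge to $\Delta_\psi^{0}\psi = \psi$ in norm, giving $\|\Delta_{\psi,\zeta_\theta}^{s_\theta}\zeta_\theta\| \to 1$. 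To make this quantitative with the right rate, I would use the integral representation of the fractional power, e.g. $\Delta^{s} = \frac{\sin(\pi s)}{\pi}\int_0^\infty \lambda^{s-1}\Delta(\lambda + \Delta)^{-1}\,\dd\lambda$, to write $\|\Delta_{\psi,\zeta_\theta}^{s_\theta}\zeta_\theta\|^2 - 1$ in terms of $\langle \zeta_\theta | f_{s_\theta}(\Delta_{\psi,\zeta_\theta}) \zeta_\theta\rangle$ with $f_s(x) = x^{2s}$, and control the difference using $\|\zeta_\theta - \psi\| = o(\theta^{1/2})$ together with $\Delta_{\psi,\psi}\psi = \psi$; alternatively one invokes the Araki-Masuda inequality $\|\psi\|_{p',\psi}=1$ directly and estimates the perturbation of $\zeta_\theta$ away from $\psi$. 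Either way, $\log \|\zeta_\theta\|_{p_\theta,\psi} \le \tfrac12\log\|\Delta_{\psi,\zeta_\theta}^{s_\theta}\zeta_\theta\|^2 = o(\theta)$.

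The main obstacle is the upper bound: the exponent $s_\theta$ does not vanish as $\theta \to 0$ (it tends to $1/p_0 - 1/2$, which is $0$ only if $p_0 = 2$ — and here $p_0 = 2$ in the intended application, but the lemma is stated for arbitrary $p_\theta \in [1,2]$), so one cannot simply Taylor-expand in the exponent. Instead the smallness must come entirely from $\zeta_\theta \to \psi$, and one must check that the relative modular operator $\Delta_{\psi,\zeta_\theta}$, whose \emph{both} arguments are moving, depends on $\zeta_\theta$ continuously enough near $\psi$ that $\|\Delta_{\psi,\zeta_\theta}^{s}\zeta_\theta\|^2 = 1 + o(\theta)$ uniformly in $s \in [0,1/2]$. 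The cleanest route is probably to bound $\|\Delta_{\psi,\zeta_\theta}^{s}\zeta_\theta\| \le \|\zeta_\theta\|_{2,\psi}^{1-2s}\,\|\Delta_{\psi,\zeta_\theta}^{1/2}\zeta_\theta\|^{2s}$ by interpolation (Hadamard three-lines on $s \mapsto \Delta_{\psi,\zeta_\theta}^{s}\zeta_\theta$), noting $\|\zeta_\theta\|_{2,\psi}=1$ and $\|\Delta_{\psi,\zeta_\theta}^{1/2}\zeta_\theta\| = \|J\zeta_\theta\| \cdot(\text{terms that are }1+o(\theta))$ — more precisely $\Delta_{\psi,\zeta_\theta}^{1/2}\zeta_\theta = J S_{\psi,\zeta_\theta}\zeta_\theta$ has norm controlled by how close $\zeta_\theta$ is to $\psi$ — so that the whole expression is $(1+o(\theta))$. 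I would fill in this last estimate by the same perturbative argument used in paper I, lem. 4, referencing it for the routine details.
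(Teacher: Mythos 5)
Your architecture is the paper's: squeeze $\|\zeta_\theta\|_{p_\theta,\psi}$ between the overlap $|\langle \zeta_\theta|\psi\rangle|$ from below and $\|\Delta_{\psi,\zeta_\theta}^{1/p_\theta-1/2}\zeta_\theta\|$ from above (lem.~\ref{lem:squeeze}), and handle the upper bound by Hadamard three-lines interpolation in the exponent. The lower bound is fine, and in fact simpler than you make it: you do not need to adjust phases, since $\|\zeta_\theta\|_{p_\theta,\psi}\ge |\langle\zeta_\theta|\psi\rangle|\ge \Rea\langle\zeta_\theta|\psi\rangle = 1-\tfrac12\|\zeta_\theta-\psi\|^2 = 1-o(\theta)$.

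The gap is in the upper bound, which you flag as ``the main obstacle'' and leave to an unspecified perturbative estimate of $\|\Delta_{\psi,\zeta_\theta}^{1/2}\zeta_\theta\|$. No perturbation argument is needed: by the defining property of the relative Tomita operator, $S_{\psi,\zeta_\theta}\,a\,|\zeta_\theta\rangle = a^*|\psi\rangle$, so taking $a=1$ gives $S_{\psi,\zeta_\theta}|\zeta_\theta\rangle = |\psi\rangle$ \emph{exactly}, hence $\|\Delta_{\psi,\zeta_\theta}^{1/2+it}\zeta_\theta\| = \|\psi\|=1$ for all $t$, while $\|\Delta_{\psi,\zeta_\theta}^{it}\zeta_\theta\|=\|\zeta_\theta\|=1$. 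Three-lines then yields $\|\Delta_{\psi,\zeta_\theta}^{1/p-1/2}\zeta_\theta\|\le 1$ for every $p\in[1,2]$, with no error term, so $\log\|\zeta_\theta\|_{p_\theta,\psi}\le 0$; combined with the lower bound this gives $0\ge \log\|\zeta_\theta\|_{p_\theta,\psi}\ge \log\bigl(1-\tfrac12\|\zeta_\theta-\psi\|^2\bigr)=o(\theta)$ and the lemma follows. Your worry that the exponent $s_\theta=1/p_\theta-1/2$ does not vanish is thus moot --- the bound is uniform in $s\in[0,1/2]$ because both boundary values equal $1$ identically, not merely $1+o(\theta)$. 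Your first proposed route (the integral representation of $\Delta^{2s}$ and an estimate of $\langle\zeta_\theta|f_{s}(\Delta_{\psi,\zeta_\theta})\zeta_\theta\rangle$) is a dead end you should drop; the second (interpolation) is correct but incomplete until you supply the identity $S_{\psi,\zeta_\theta}\zeta_\theta=\psi$.
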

\begin{proof}
By the Hadamard three lines theorem, $\| \Delta_{\psi,\zeta}^{1/p-1/2} \zeta \| \le C_0^{1/p-1/2} C_1^{1-1/p}$, where 
$C_0 = \sup_t \| \Delta^{it}_{\psi,\zeta} \zeta \| = 1, C_1 = \sup_t \| \Delta^{1/2 + it}_{\psi,\zeta} \zeta \| = 1$, so 
$\| \Delta_{\psi,\zeta}^{1/p-1/2} \zeta \| \le 1$. In view of 
${\rm Re} \langle \zeta | \psi \rangle \le | \langle \zeta | \psi \rangle |$, the inequalities of lem. \ref{lem:squeeze} give
\ben
\frac{1}{2} \| \psi-\zeta\|^2 = 1- {\rm Re} \langle \zeta | \psi \rangle  \ge 1- \| \zeta \|_{p,\psi} \ge 0.
\een
The chain rule for ln then proves the lemma if we substitute $|\zeta\rangle=|\zeta_\theta\rangle$, divide by $\theta$
and pass to the limit. 
\end{proof}

Lem. \ref{lem:firstlaw} applied to $|\zeta_\theta\rangle$ and $\cM=\cA'$ implies that $B=0$ in \eqref{eq:AB}. Now we study the term $A$ in that equation. By the chain rule
\ben\label{eq:calc1}
\begin{split}
 -\frac{\dd}{\dd \theta} \| \Pi_\Lambda \Gamma_\psi (\theta) \|^2 \bigg|_{\theta=0} &=  -2\frac{\dd}{\dd \theta} \Re \langle \xi_{\psi}^{\cA} |\Pi_\Lambda \Gamma_\psi(\theta) \rangle \bigg|_{\theta=0} \\
 &=  -2\frac{\dd}{\dd \theta} \langle \xi_{\psi}^{\cA} |  \Pi_\Lambda \Delta_{\psi_\cA}^{\theta} (D\eta_\cA:D\psi_\cA)_{i\theta}^{-1} V_\psi (D\eta_\cB:D\psi_\cB)_{i\theta} \xi_{\psi}^{\cB} \rangle \bigg|_{\theta=0}. 
\end{split} 
\een
Here we use the Connes-cocycle $(D\psi:D\phi)_t = \Delta_{\psi,\phi}^{it} \Delta_{\phi,\phi}^{-it}$, or rather its analytic continuation to imaginary $t$-values. The derivative can be also calculated by analytic continuation $\theta \to i\theta$ by lem. \ref{lem:5}, and after that, the Connes-cocycles 
$(D\eta_\cA:D\psi_\cA)_{\theta}^{-1}, (D\eta_\cB:D\psi_\cB)_{-\theta}$ (real argument) are elements of $\cA, \cB$ respectively, by standard results of relative modular theory. 
Then we are allowed to use 
the definition of $V_\psi$, eq. \eqref{eq:Vdef}, and $ \Delta_{\psi_\cA}^{\theta}|\xi_{\psi}^{\cA} \rangle = |\xi_{\psi}^{\cA} \rangle$. We obtain:
\ben\label{eq:calc2}
 -\frac{\dd}{\dd \theta} \| \Pi_\Lambda \Gamma_\psi(\theta) \|^2 \bigg|_{\theta=0}
 = -2i\frac{\dd}{\dd \theta} \langle \xi_{\psi}^{\cA} | (D\eta_\cA:D\psi_\cA)_{\theta}^{-1} T[(D\eta_\cB:D\psi_\cB)_{\theta}] \xi_{\psi}^{\cA} \rangle \bigg|_{\theta=0}.
\een
Finally, we distribute the derivatives on the right side and apply the definition of the relative entropy in terms of the Connes-cocycle, showing in view of 
\eqref{eq:AB} and $B=0$ that \eqref{eq:limit} holds. 

\medskip

Step 3): In view of step 2), we should next look for an upper bound on $ \|\Pi_\Lambda \Gamma_\psi(\theta) \|_{p_\theta,\psi}$
(with $L_p$-norm relative to $\cA'$ and $|\psi\rangle$).  
We apply the following lemma (paper I, lem. 9) to $|G(z)\rangle = \Pi_\Lambda |\Gamma_\psi(z)\rangle$, $p_0=2, p_{1/2}=q$ and 
$\cM=\cA'$:

\begin{lemma}
\label{lem:hirsch}
Let $|G(z)\rangle$ be a $\sH$-valued holomorphic function on the strip $\bS_{1/2}=\{0<{\rm Re}z<1/2\}$ 
that is uniformly bounded in the closure, $|\psi\rangle \in \sH$ a possibly non-faithful state of a
sigma-finite v. Neumann algebra $\mathcal M$ in standard form acting on $\sH$. Then, for $0<\theta<1/2$, 
\ben
\frac{1}{p_\theta} = \frac{1-2\theta}{p_0} + \frac{2\theta}{p_1}
\een
with $p_0,p_{1/2} \in [1,2]$, we have 
\begin{align}
\label{himp}
& \log \left\| G(\theta)\right\|_{p_\theta, \psi, \cM} \\
  \leq  & \int_{-\infty}^{\infty} {\rm d} t 
  \left(
 (1-2 \theta)  \alpha_\theta(t) \log \left\| G(it) \right\|_{p_0, \psi,\cM} + (2\theta)  \beta_\theta(t) 
  \log  \left\| G(1/2+it) \right\|_{p_{1/2}, \psi,\cM} \right),
  \nonumber
\end{align}
where
\begin{equation}
\alpha_\theta(t) = \frac{ \sin(2\pi\theta)}{(1-2\theta)(\cosh(2\pi t ) - \cos(2\pi \theta)) }\,,
\qquad \beta_\theta(t) = \frac{ \sin(2\pi\theta)}{2 \theta(\cosh(2\pi t ) + \cos(2\pi \theta)) }.
\end{equation}
\end{lemma}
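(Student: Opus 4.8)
The plan is to dualise and reduce the statement to a scalar Poisson--kernel estimate on the strip $\bS_{1/2}$. Fix $\theta\in(0,1/2)$ and let $q_\theta\in[2,\infty]$ be the H\"older conjugate of $p_\theta$, so that $1/q_\theta=(1-2\theta)/q_0+(2\theta)/q_{1/2}$ with $q_0=2$ (the conjugate of $p_0$) and $q_{1/2}$ the conjugate of $p_{1/2}$. Using the sharp H\"older inequality for Araki--Masuda spaces \cite{AM} --- for $p_\theta=1$ this is the supremum characterisation of the fidelity recalled in paper~I \cite{Faulkner:2020iou}, lem.~3~(1) and in the proof of lem.~\ref{lem:squeeze} --- I would first record
\ben
\| G(\theta)\|_{p_\theta,\psi,\cM}=\sup\big\{\,|\langle\chi\,|\,G(\theta)\rangle|\ :\ |\chi\rangle\in\sH,\ \|\chi\|_{q_\theta,\psi,\cM}\le 1\,\big\},
\een
so that it suffices to bound $|\langle\chi\,|\,G(\theta)\rangle|$ for a fixed such $|\chi\rangle$ and take the supremum at the very end.

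The key step is to attach to $|\chi\rangle$ a holomorphic $\sH$-valued function $z\mapsto|\chi(z)\rangle$ on $\overline{\bS_{1/2}}$ with $|\chi(\theta)\rangle=|\chi\rangle$ and, for each $\epsilon>0$, $\|\chi(it)\|_{q_0,\psi,\cM}\le 1+\epsilon$ and $\|\chi(1/2+it)\|_{q_{1/2},\psi,\cM}\le 1+\epsilon$ for all $t\in\RR$. This is the ``calculus lemma'' underlying complex interpolation of the Araki--Masuda $L_p(\cM,\psi)$ scale (\cite{AM}; see also \cite{Berta2,JencovaLp1,Hollands2}); concretely I would build $\chi(z)$ from fractional powers $\Delta_{\phi,\psi}^{\beta(z)}$ (with $\beta$ affine in $z$) of relative modular operators applied to the natural-cone vector of a suitable state $\phi$, together with a density argument to pass from such special $\chi$ to general ones and to accommodate a possibly non-faithful $|\psi\rangle$. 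With both families in hand, the ordinary Hadamard three-lines bound (complex interpolation along the vertical line through $z$) gives $\|\chi(z)\|_{q(z),\psi,\cM}\le 1+\epsilon$ and $\|G(z)\|_{p(z),\psi,\cM}\le\|G(z)\|$ throughout $\bS_{1/2}$, where $p(z),q(z)$ denote the interpolated indices; hence, by H\"older, $h(z):=\langle G(z)\,|\,\chi(z)\rangle$ is holomorphic on $\bS_{1/2}$ and bounded on the closure (this is where the hypothesis that $\|G\|$ is bounded is used).

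Finally, $\log|h|$ is subharmonic and bounded above on $\bS_{1/2}$, hence dominated there by the Poisson integral of its boundary values. A standard evaluation of the harmonic measure of $\bS_{1/2}$ at the real point $\theta$ --- via the integrals $\int_{\RR}(\cosh(2\pi t)\mp\cos(2\pi\theta))^{-1}\,\dd t$ --- shows that the harmonic measures of the two boundary components are precisely $(1-2\theta)\alpha_\theta(t)\,\dd t$ on $\{\Re z=0\}$ and $(2\theta)\beta_\theta(t)\,\dd t$ on $\{\Re z=1/2\}$, together of total mass one, so
\ben
\log|h(\theta)|\le\int_{-\infty}^{\infty}(1-2\theta)\alpha_\theta(t)\log|h(it)|\,\dd t+\int_{-\infty}^{\infty}(2\theta)\beta_\theta(t)\log|h(1/2+it)|\,\dd t.
\een
On $\{\Re z=0\}$, H\"older gives $|h(it)|\le\|G(it)\|_{p_0,\psi,\cM}\,\|\chi(it)\|_{q_0,\psi,\cM}\le(1+\epsilon)\|G(it)\|_{p_0,\psi,\cM}$, and likewise $|h(1/2+it)|\le(1+\epsilon)\|G(1/2+it)\|_{p_{1/2},\psi,\cM}$ on $\{\Re z=1/2\}$. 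Substituting these, taking the supremum over $|\chi\rangle$ (which restores $\log\|G(\theta)\|_{p_\theta,\psi,\cM}$ on the left), and letting $\epsilon\to0$ yields \eqref{himp}.

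I expect the main obstacle to be the construction of the dual holomorphic family $\chi(z)$ with the stated \emph{uniform} boundary bounds: this is precisely where the domain, closability and analytic-continuation properties of the unbounded relative modular operators must be controlled --- via the kind of Phragm\'en--Lindel\"of and edge-of-the-wedge arguments used repeatedly in paper~I --- and where the approximation steps for a non-faithful $|\psi\rangle$ enter. By comparison, the reduction to the scalar function $h$ and the harmonic-measure computation should be routine; the one delicate point there is the a~priori upper bound on $\log|h|$ needed to justify the Poisson domination, which is supplied by the boundedness of $\|G\|$ together with the three-lines bound on $\|\chi(z)\|_{q(z),\psi,\cM}$.
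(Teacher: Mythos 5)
Your overall strategy is the right one, and it is essentially the route by which this lemma is actually established: the paper itself gives no proof here but simply cites paper~I, lem.~9, where the inequality is obtained by exactly this combination of (i) duality/H\"older for the Araki--Masuda scale, (ii) an interpolating holomorphic family on the dual side, and (iii) Hirschman's strengthening of the three-lines theorem, whose kernels are the harmonic-measure densities of $\bS_{1/2}$ that you compute (and which you verify correctly, including the weights $1-2\theta$ and $2\theta$). So the reduction to the scalar subharmonic estimate, the identification of $\alpha_\theta,\beta_\theta$, and the use of boundedness of $\|G\|$ to justify Poisson domination are all sound and match the source.

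The gap is the one you yourself flag, and it is not a small one: the existence, for each $|\chi\rangle$ in the unit ball of $L_{q_\theta}(\cM,\psi)$ (or a dense subset thereof), of a holomorphic family $|\chi(z)\rangle$ with $|\chi(\theta)\rangle=|\chi\rangle$ and \emph{uniform} boundary bounds $\|\chi(it)\|_{q_0,\psi}\le 1+\epsilon$, $\|\chi(1/2+it)\|_{q_{1/2},\psi}\le 1+\epsilon$ is the entire operator-algebraic content of the lemma. Writing $\chi(z)=\Delta_{\phi,\psi}^{\beta(z)}\chi'$ with $\beta$ affine only controls the boundary norms against that \emph{one} $\phi$, whereas for $q\ge 2$ the Araki--Masuda norm is a supremum over all $\phi$; closing this requires the duality theorem of \cite{AM} (equality, not just H\"older, in your opening display), Phragm\'en--Lindel\"of control of the unbounded relative modular powers, and the reduction to the support of a non-faithful $\psi$ -- none of which is carried out. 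Two further bookkeeping points you should fix in a complete write-up: the scalar function must be taken as $z\mapsto\langle\chi(\bar z)\,|\,G(z)\rangle$ (or the family built anti-holomorphically), since the inner product is anti-linear in its first slot; and the a priori bound on $|h|$ needs the \emph{Hilbert-space} norm of $\chi(z)$ to be bounded on the strip, which does not follow directly from a bound on $\|\chi(z)\|_{q(z),\psi}$ and must instead come from the explicit construction of the dual family.
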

\noindent
We furthermore use that 
\ben
\| \Pi_\Lambda \Gamma_\psi(it) \|_{2,\psi,\cA'} = \|\Pi_\Lambda \Gamma_\psi(it) \| 
= \|  \Delta_{\eta,\psi;\cA}^{it} V_\psi \Delta_{\eta,\psi;\cB}^{-it} \xi_{\psi}^{\cB}\| \le \| V_\psi \| \le 1, 
\een
since $V_\psi$ is a contraction. Then we find using \eqref{eq:limit} that 
\ben
\label{eq:statL}
\begin{split}
& S(\psi_\cA | \eta_\cA) - S(\psi_\cB | \eta_\cB) \\
\ge &
-2\pi \int_{-\infty}^\infty \left[ 1+\cosh(2\pi t) \right]^{-1} 
 \log \| \Pi_\Lambda \Delta_{\eta,\psi;\cA}^{1/2+it} V_\psi \Delta_{\eta,\psi;\cB}^{-1/2-it} \xi_{\psi}^{\cB} \|_{q,\psi,\cA'} \dd t. 
 \end{split}
\een  
We can let $\Lambda \to \infty$, using the 
continuity of the $L_q$-norm in the norm-topology on the Hilbert space $\sH$, 
for $1 \le q \le 2$, \cite{AM}, lem. 6.1 (2), thus allowing us to replace $\Pi_\Lambda$ with $1$.
This completes the proof of the theorem under the majorization hypothesis \eqref{eq:major}.

Step 4):
We shall now remove the majorization hypothesis \eqref{eq:major} by the same method as in paper I. 
This hypothesis was crucially used only in lem. \ref{lem:2}, which in turn was used in order to show that the regularized 
vector $\Pi_\Lambda|\Gamma_\psi(z)\rangle$ is holomorphic near $z=0$. Since we now no longer want to assume 
the majorization hypothesis, we shall introduce a ``Gaussian'' regularization. For $P>0$, we consider the scaled Gaussians 
\ben
\hat g_P(x) = e^{-x^2/(2P)}  \quad \Longrightarrow \quad g_P(k) = \sqrt{P/\pi} e^{-Pk^2/2} . 
\een
Then we define
\ben
\label{eq:psiPdef}
|\psi_P\rangle := \hat g_P(\log \Delta_{\eta,\psi}) |\xi_\psi \rangle/\|  \hat g_P(\log \Delta_{\eta,\psi}) |\xi_\psi \rangle \|. 
\een
It follows from the spectral theorem that the corresponding functional $\omega_{\psi_P} \to \omega_\psi$ in norm. 
Thus, the state vector representatives in the natural cone converge strongly in view of the well-known inequality
$\|\omega_{\psi_P} -  \omega_\psi\| \ge \| \xi_\psi -\xi_{\psi_P}\|^2$.
In lem. \ref{lem:3}, 3) we show that $\omega_{\psi_P} \le c_P \omega_\eta$, so the analysis of the previous steps is applicable and the 
theorem holds for $\omega_{\psi_P}$ in place of $\omega_\psi$. 

Prop. \ref{prop:1} shows   in particular that 
$
\limsup_{P \to \infty} S(\psi_{P\cA} | \eta_{\cA}) = S( \psi_{\cA} | \eta_{\cA}), 
$
whereas lower semi-continuity of the relative entropy \cite{Araki2} gives us
$
S(\psi_\cB | \eta_\cB) \le \liminf_{P \to \infty} S(\psi_{P\cB} | \eta_{\cB}).
$
Thus it remains to be shown that the lower bound in the statement of the theorem converges for $P \to \infty$.

\cite{Araki2}, lem. 4.1 shows  that $\Delta_{\eta,\psi_P;\cA} \to \Delta_{\eta,\psi;\cA}$ in the 
strong resolvent sense. Therefore, for any bounded continuous function $f: \RR_+ \to \CC$, $f(\Delta_{\psi_P,\eta;\cA}) \to f(\Delta_{\psi,\eta;\cA})$ strongly, \cite{specth}, prop. 10.1.9, and therefore, for example, $(i+\log \Delta_{\psi_P,\eta; \cA})^{-1} \to (i+\log \Delta_{\psi,\eta; \cA})^{-1}$
in the strong sense. By \cite{specth}, prop. 10.1.8, $\Delta_{\psi_P,\eta; \cA}^{it} \to \Delta_{\eta,\psi;\cA}^{it}$
strongly. Then, $|\Gamma_{\psi_P}(1/2+it)\rangle \to |\Gamma_{\psi}(1/2+it)\rangle$ strongly, and therefore 
$\| \Gamma_{\psi_P}(1/2+it)\|_{1,\psi_P} \to \| \Gamma_{\psi}(1/2+it)\|_{1,\psi}$ by \cite{AM}, lem. 6.1 (2) for fixed $t$ or by the strong continuity of the fidelity, see paper I, lem. 11. Therefore,
$\| \Gamma_{\psi_P}(1/2+it)\|_{q,\psi_P} \to \| \Gamma_{\psi}(1/2+it)\|_{q,\psi}$ for $q \in [1,2]$ by \cite{Hollands2}, prop. 1 since the 1-norm is the 
fidelity. The lower bound in \eqref{eq:statL} thereby converges to the lower bound claimed in the theorem by dominated convergence.

\vspace{1cm}

{\bf Acknowledgements:} SH\ is grateful to the Max-Planck Society for supporting the collaboration between MPI-MiS and Leipzig U., grant Proj.~Bez.\ M.FE.A.MATN0003. He thanks Felix Otto for discussions and R. Longo for suggesting a connection with the index.
TF and SH benefited from the KITP program ``Gravitational Holography'' which was supported in part by the National Science Foundation under Grant No. NSF PHY-1748958.
TF acknowledges part of the work presented here is supported by the DOE under grant DE-SC0019517 as well as the Air Force Office of Scientific Research under award number FA9550-19-1-0360.

\appendix

\section{Continuity of relative entropy under Gaussian regularization}

The following proposition is strictly speaking a special case of paper I, thm. 5 but we give a somewhat different proof. 

\begin{proposition}\label{prop:1}
Let $\omega_\psi, \omega_\eta$ be normal state functionals on $\cM$, a v. Neumann algebra in standard form acting on $\sH$, and let $\omega_{\psi_P}$ be the regularized state functionals defined by the Gaussian regularization of the vectors as in eq. \eqref{eq:psiPdef}. Then $\lim_{P \to \infty} S(\psi_{P} | \eta) = S( \psi | \eta)$.
\end{proposition}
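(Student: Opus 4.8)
The plan is to establish the two inequalities $\limsup_{P\to\infty} S(\psi_P|\eta) \le S(\psi|\eta)$ and $\liminf_{P\to\infty} S(\psi_P|\eta) \ge S(\psi|\eta)$ separately. The second one is essentially free: since $\omega_{\psi_P} \to \omega_\psi$ in norm (as noted after \eqref{eq:psiPdef}, by the spectral theorem applied to $\hat g_P(\log\Delta_{\eta,\psi})$), lower semicontinuity of the relative entropy in the first argument \cite{Araki2} gives $S(\psi|\eta) \le \liminf_{P\to\infty} S(\psi_P|\eta)$ immediately. So the content is the $\limsup$ bound, and for that I would exploit the explicit Gaussian form of the regularization rather than any abstract continuity statement.

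The key point is that $|\psi_P\rangle$ is, by construction, obtained from $|\xi_\psi\rangle$ by a bounded, positive, spectral-calculus function of $\log\Delta_{\eta,\psi}$. Concretely, write $h = \log\Delta_{\eta,\psi}$ and recall that, up to the normalization constant $N_P = \| \hat g_P(h)\xi_\psi\|$, one has $|\psi_P\rangle = N_P^{-1}\hat g_P(h)|\xi_\psi\rangle$ with $N_P \to 1$. I would use the representation of the relative entropy as $S(\psi|\eta) = -\lim_{\alpha\to 0^+}\alpha^{-1}(\langle \xi_\psi|\Delta_{\eta,\psi}^\alpha \xi_\psi\rangle - 1) = \langle \xi_\psi| h\, \xi_\psi\rangle$ when the latter is finite, i.e. $S(\psi|\eta) = \int_{\mathbb R} \lambda\, \dd\mu_\psi(\lambda)$ where $\mu_\psi$ is the spectral measure of $h$ in the state $|\xi_\psi\rangle$ (with the convention that $S = +\infty$ precisely when this integral diverges, which also forces $\int \lambda_+\,\dd\mu_\psi = \infty$ since the negative part is always integrable — this requires the standard fact that $\langle\xi_\psi|\Delta_{\eta,\psi}^\alpha\xi_\psi\rangle \le 1$, i.e. $\int e^{\alpha\lambda}\dd\mu_\psi \le 1$ for $\alpha\in[0,1]$). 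Then the spectral measure of $h$ in the state $|\psi_P\rangle$ is $\dd\mu_{\psi_P}(\lambda) = N_P^{-2} e^{-\lambda^2/P}\dd\mu_\psi(\lambda)$, and $S(\psi_P|\eta) = \int_{\mathbb R}\lambda\, \dd\mu_{\psi_P}(\lambda) = N_P^{-2}\int_{\mathbb R}\lambda\, e^{-\lambda^2/P}\dd\mu_\psi(\lambda)$ — but here one must be slightly careful: $\psi_P$ is generally not the same functional as the one whose modular operator appears, so strictly $S(\psi_P|\eta)$ is computed from $\Delta_{\eta,\psi_P}$, not $\Delta_{\eta,\psi}$. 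The resolution, which is standard (see paper I, thm. 5, and \cite{Araki2}), is that because $|\psi_P\rangle$ lies in the closure of $\cM|\xi_\psi\rangle$ with $\pi^{\cM'}(\psi_P) \le \pi^{\cM'}(\psi)$, the relevant relative entropy is still governed by $h$ through the formula $S(\psi_P|\eta) = \langle \psi_P| h\,\psi_P\rangle$ when finite; I would cite Araki's lemma 4.1 (already invoked in Step 4 for strong resolvent convergence $\Delta_{\eta,\psi_P} \to \Delta_{\eta,\psi}$) together with the explicit intertwining to justify this.

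Granting this, the claim reduces to the elementary measure-theoretic statement: if $\mu$ is a positive finite measure on $\mathbb R$ with $\int e^{\alpha\lambda}\dd\mu(\lambda)\le 1$ for $\alpha\in[0,1]$ (so the negative tail is well controlled), then $\int \lambda\, e^{-\lambda^2/P}\dd\mu(\lambda) \to \int\lambda\,\dd\mu(\lambda)$ as $P\to\infty$, with the value $+\infty$ allowed on the right. For the negative part $\lambda_- e^{-\lambda^2/P}$ this is dominated convergence (domination by $\lambda_-$, which is $\mu$-integrable by the exponential bound). For the positive part $\lambda_+ e^{-\lambda^2/P}$, the integrands increase monotonically to $\lambda_+$ as $P\to\infty$, so monotone convergence gives $\int\lambda_+ e^{-\lambda^2/P}\dd\mu \to \int\lambda_+\,\dd\mu \in [0,\infty]$. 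Combining, $\int\lambda\,e^{-\lambda^2/P}\dd\mu \to \int\lambda\,\dd\mu$, and dividing by $N_P^2 \to 1$ finishes the argument; in particular the $\limsup$ (indeed the full limit) equals $S(\psi|\eta)$ whether or not the latter is finite.

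The main obstacle, and the only place where real care is needed, is the middle step: justifying that $S(\psi_P|\eta)$ — defined a priori via $\Delta_{\eta,\psi_P}$ and the natural-cone representative of $\psi_P$ — actually equals $N_P^{-2}\int\lambda\, e^{-\lambda^2/P}\dd\mu_\psi(\lambda)$, i.e. that the Gaussian cutoff in the modular-operator spectral calculus commutes correctly with passing to the relative entropy. The cleanest route is probably to show directly that the Connes cocycle $(D\psi_P : D\eta)_t$ is obtained from $(D\psi:D\eta)_t$ by the same spectral manipulation and then differentiate at $t=0$ using the cocycle formula for $S$ from Section 2; alternatively one invokes paper I, thm. 5, of which this proposition is declared to be a special case, and simply records the Gaussian-specific simplification. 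I would present it via the spectral-measure computation above, flagging the identification of $S(\psi_P|\eta)$ with the cutoff integral as the one technical input and deferring its detailed justification to \cite{Araki2}, lem. 4.1 and paper I, thm. 5.
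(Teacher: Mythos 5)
Your overall skeleton matches the paper's: lower semicontinuity of $S(\cdot|\eta)$ disposes of the $\liminf$ direction, and the $\limsup$ direction is reduced to a monotone/dominated-convergence argument for the Gaussian-cutoff spectral integral of $\log\Delta_{\eta,\psi}$ against $\dd\mu_\psi$. The problem is the pivotal middle step, which you yourself flag and then defer: the identity $S(\psi_P|\eta)=-\langle\psi_P|\log\Delta_{\eta,\psi}\,\psi_P\rangle$ (I am correcting your sign here; from the paper's definition $S(\psi|\eta)=-\langle\xi_\psi|\log\Delta_{\eta,\psi}\xi_\psi\rangle$, and relatedly the bound $\langle\xi_\psi|\Delta_{\eta,\psi}^\alpha\xi_\psi\rangle\le 1$ controls the \emph{positive} spectral tail, not the negative one). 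This identity is \emph{false} in general, and the true inequality goes the wrong way for you: already for matrix algebras, writing $\zeta$ for the vector representative of $\psi_P$, one computes
\ben
S(\omega_\zeta|\omega_\eta)+\langle\zeta|\log\Delta_{\eta,\psi}\zeta\rangle
=\tr\bigl[\zeta^*\zeta\,(\log \zeta^*\zeta-\log\rho_\psi)\bigr]\ \ge\ 0,
\een
with equality only if the commutant density matrix of $\psi_P$ coincides with that of $\psi$ — which the Gaussian smearing destroys. So $S(\psi_P|\eta)\ge -\langle\psi_P|\log\Delta_{\eta,\psi}\psi_P\rangle$, i.e.\ your ``equality'' can only supply a \emph{lower} bound on $S(\psi_P|\eta)$, whereas the entire content of the proposition is an \emph{upper} bound on $\limsup_P S(\psi_P|\eta)$. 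The references you propose do not fill this: strong resolvent convergence $\Delta_{\eta,\psi_P}\to\Delta_{\eta,\psi}$ (Araki, lem.~4.1) cannot control $\langle\xi_{\psi_P}|\log\Delta_{\eta,\psi_P}\xi_{\psi_P}\rangle$ because $\log$ is unbounded — this is exactly why relative entropy is only lower semicontinuous — and invoking paper I, thm.~5 is circular, since the proposition is declared to be a special case of it.

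What actually closes the gap in the paper is lem.~\ref{lem:3}: one writes $|\psi_P\rangle=a_P|\xi_\psi\rangle$ with $a_P\in\cM$, $\|a_P\|\le1$ (a nontrivial fact from paper I, lem.~5(1)), obtains the form identity $\Delta_{\psi_P,\eta}=\|a_P\xi_\psi\|^{-2}a_P\Delta_{\psi,\eta}a_P^*$, and applies Jensen's operator inequality to $x^\alpha$ to get $\Delta_{\psi_P,\eta}^\alpha\ge\|a_P\xi_\psi\|^{-2\alpha}a_P\Delta_{\psi,\eta}^\alpha a_P^*$ for $\alpha\in[0,1]$. Feeding this into the representation $S(\psi_P|\eta)=\lim_{\alpha\to1^-}(\alpha-1)^{-1}\log\langle\xi_\eta|\Delta_{\psi_P,\eta}^\alpha\xi_\eta\rangle$ (the prefactor $(\alpha-1)^{-1}<0$ flips the inequality into the needed direction) and using $J\Delta_{\psi,\eta}^{\alpha-1}J=\Delta_{\eta,\psi}^{1-\alpha}$ yields
\ben
S(\psi_P|\eta)\ \le\ -\,\frac{\langle \hat g_P\xi_\psi|\log\Delta_{\eta,\psi}\,\hat g_P\xi_\psi\rangle}{\|\hat g_P\xi_\psi\|^{2}}\;-\;2\log\|a_P\xi_\psi\|,
\een
i.e.\ your cutoff spectral integral \emph{plus} the correction $-2\log N_P\to0$, which accounts exactly for the slack in the inequality above. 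Only at this point does your final measure-theoretic limit (which is fine, modulo the tail/sign bookkeeping) become applicable. As written, your proposal asserts the one step that carries the proof's substance and defers it to sources that do not contain it.
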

\begin{proof}
We have
$
S(\psi | \eta) \le \liminf_{P \to \infty} S(\psi_{P} | \eta)
$
by lower semi-continuity of the relative entropy, see \cite{Araki2}. Therefore, we only need to show that 
$\limsup_{P \to \infty} S(\psi_{P} | \eta) \le S( \psi | \eta)$. 

The following lemma gives the key properties of $\psi_P$:
\begin{lemma}\label{lem:3}
\begin{enumerate}
\item There is $a_P \in \cM$ such that $|\psi_P\rangle = a_P |\xi_\psi\rangle$ and
$\| a_P\| \le 1$ for all $P$.
\item For $0 \le \alpha \le 1$, we have 
\ben
\Delta_{\psi_P,\eta}^{\alpha} \ge \| a_P \xi_\psi \|^{-2\alpha}  a_P \Delta_{\psi,\eta}^{\alpha} a_P^*
\een
\item For some $c_P<\infty$, we have $\omega_{\psi_P} \le c_P\omega_\eta$. 
\end{enumerate}
\end{lemma}
\begin{proof}
1) This follows from paper I, lem. 5 (1). 

2) An easy calculation gives that $\Delta_{\psi_P,\eta} = \| a_P \xi_\psi \|^{-2}  a_P \Delta_{\psi,\eta} a_P^*$. 
Then, since $\|a_P\|\le 1$ by 1), it is legitimate to use Jensen's inequality (see e.g. \cite{Petz1993}, lem. 1.2, or \cite{Petz4}, thm. C)  
applied to the operator monotone function $x^\alpha$. This gives the desired inequality.

3) This follows from paper I, lem. 5 (4,5).
\end{proof}

We continue with the following representation of the relative entropy:
\ben
S(\psi_P | \eta) = \lim_{\alpha \to 1^-} \frac{1}{\alpha-1} \log \langle \xi_\eta | \Delta^{\alpha}_{\psi_P,\eta} \xi_\eta \rangle. 
\een 
Now we use lem. \ref{lem:3}, 2), and $S_{\psi,\eta} = J \Delta_{\psi,\eta}^{1/2}$:
\ben
\begin{split}
S(\psi_P | \eta) 
\le& \lim_{\alpha \to 1^-} \frac{1}{\alpha-1} \log \frac{  \langle a_P^* \xi_\eta | \Delta^{\alpha}_{\psi,\eta} a_P^* \xi_\eta \rangle}{\|a_{P} \xi_\psi\|^2} 
-2 \log \|a_P \xi_\psi \| \\ 
=& \lim_{\alpha \to 1^-} \frac{1}{\alpha-1} \log \frac{  \langle a_P \xi_\psi | J \Delta^{\alpha-1}_{\psi,\eta} J a_P \xi_\psi \rangle}{\|a_{P} \xi_\psi\|^2} 
-2 \log \|a_P \xi_\psi \| \\ 
=& \lim_{\alpha \to 1^-} \frac{1}{\alpha-1} \log \frac{  \langle a_P \xi_\psi | \Delta^{1-\alpha}_{\eta,\psi}  a_P \xi_\psi \rangle}{\|a_{P} \xi_\psi\|^2} 
-2 \log \|a_P \xi_\psi \| . 
\end{split}
\een
We write out again the definition of $a_P$ in this expression, paper I, proof of lem. 5 (1), obtaining
\ben
\begin{split}
S(\psi_P | \eta) \le&-\lim_{\theta \to 0^+} \frac{1}{\theta} \log 
\frac{  \langle \hat g_P(\Delta_{\eta,\psi}) \xi_{\psi} | \Delta_{\eta,\psi}^{\theta} \hat g_P(\Delta_{\eta,\psi}) \xi_{\psi} \rangle}{\| \hat g_P(\Delta_{\eta,\psi}) \xi_{\psi}\|^2} 
-2 \log \|a_P \xi_\psi \| \\
=& -\langle \hat g_P(\Delta_{\eta,\psi}) \xi_{\psi} | (\log \Delta_{\eta,\psi}) \hat g_P(\Delta_{\eta,\psi}) \xi_{\psi} \rangle -2 \log \|a_P \xi_\psi \| . 
\end{split}
\een
The term in the second line can be written in terms of the spectral measure $E_{\eta,\psi}(k) \dd k$ of $\log \Delta_{\eta,\psi}$ and then we can argue just as in paper I, thm. 5 (2) to show that
$\limsup_{P \to \infty} S(\psi_P | \eta) \le S(\psi | \eta)$. 
\end{proof}

\section{Weighted $L_p$ spaces \cite{AM}} 

The weighted $L_p$-spaces were defined by \cite{AM} relative to a fixed cyclic and separating vector $|\psi\rangle \in \H$ in the a natural cone of a standard representation of a v. Neumann algebra $\cM$. 
For $p \ge 2$, the space $L_p(\cM, \psi)$ is defined as 
\ben
L_p(\cM, \psi) = \{ |\zeta\rangle \in \bigcap_{|\phi\rangle \in \H} \sD(\Delta_{\phi, \psi}^{(1/2)-(1/p)}), \|\zeta\|_{p,\psi} < \infty\}.
\een
Here, the norm is 
\ben
\|\zeta\|_{p,\psi} = \sup_{\|\phi\|=1} \|\Delta_{\phi, \psi}^{(1/2)-(1/p)}\zeta\|.
\een
For $1 \le p < 2$, $L_p(\cM, \psi)$ is defined as the completion of $\sH$ with respect to the following norm:
\ben\label{eq:pnorm}
\|\zeta\|_{p,\psi} = \inf \{  \|\Delta_{\phi, \psi}^{(1/2)-(1/p)}\zeta\| : \|\phi\|=1, \pi^\cM(\phi) \ge \pi^\cM(\psi)=1 \}.
\een
 The generalization to non-faithful 
$\psi$, whose representing vector is not separating, is laid out some detail in \cite{Berta2}, or in \cite{Hollands2} using 
a variational method.

\end{document}